\keywords{probabilistic logic programming, distribution semantics, weighted logic programming}
\newcommand\numberthis{\addtocounter{equation}{1}\tag{\theequation}} 
\theoremstyle{definition}\newtheorem{construction}[thm]{Construction}
\newcommand{\powset}{\mathcal{P}}
\newcommand{\var}{\mathit{Var}}
\newcommand{\alphabet}{\mathcal{A}}
\newcommand{\lr}[1]{\left\langle#1\right\rangle}
\renewcommand{\phi}{\varphi}
\newcommand{\ob}{\mathsf{Ob}}
\newcommand{\sets}{\mathbf{Sets}}
\newcommand{\poset}{\mathbf{Poset}}
\newcommand{\clauseC}{\mathcal{C}}
\newcommand{\Prog}{\mathbb{P}} 
\newcommand{\progP}{\mathbb{P}}
\newcommand{\progL}{\mathbb{L}}
\newcommand{\dom}{\mathsf{dom}}
\newcommand{\op}{\mathrm{op}}
\newcommand{\iso}{\cong}
\newcommand{\xto}{\xrightarrow}
\newcommand{\id}{id}
\newcommand{\At}{\mathsf{At}}
\newcommand{\ot}{\leftarrow}
\newcommand{\To}{\Rightarrow}
\newcommand{\limit}{\lim}
\newcommand{\after}{\circ}
\newcommand{\SubDist}{\mathcal{D}_{\scriptscriptstyle \leq 1}}
\renewcommand{\Pr}{\mathsf{Pr}}
\newcommand{\progPr}{\mu}
\newcommand{\Dist}{\mathcal{D}}
\newcommand{\por}{\lor} 
\newcommand{\head}[1]{\mathsf{Head}(#1)}
\newcommand{\body}[1]{\mathsf{Body}(#1)}
\newcommand{\supp}{\mathsf{supp}}
\newcommand{\catC}{\mathbf{C}}
\newcommand{\F}{\mathcal{M}_{{pr}}}
\newcommand{\liftF}{\widehat{\mathcal{M}}_{pr}}
\newcommand{\G}{\SubDist\powset_f}
\newcommand{\liftG}{\lift{\SubDist} \lift{\powset_f}}
\newcommand{\U}{\mathcal{U}}
\newcommand{\K}{\mathcal{K}}
\newcommand{\funcU}{\mathcal{U}}
\newcommand{\funcK}{\mathcal{K}}
\newcommand{\biglor}{\bigvee}
\newcommand{\posw}{\mathsf{pw}} 
\newcommand{\poswtr}{!} 
\renewcommand{\S}{\mathcal{S}}
\newcommand{\intp}[1]{\llbracket#1 \rrbracket}
\newcommand{\Ord}{\mathbf{Ord}} 
\newcommand{\func}[1]{\mathsf{#1}}
\newcommand{\repSubProg}[1]{\mathsf{SubProg}(#1)}
\newcommand{\lab}[1]{\mathsf{Label}(#1)}
\newcommand{\PLP}{\scriptstyle{\mathsf{PLP}}}
\newcommand{\LawTh}{\mathbf{L}}
\newcommand{\lift}[1]{\widehat{#1}}
\newcommand{\genPLP}{\mathcal{S}}
\newcommand{\genPLPclause}{\mathcal{R}}
\newcommand{\genDist}{\mathcal{O}}
\newcommand{\clauseCfunc}{\U \At \times \lift{\powset_f} \U \At}
\newcommand{\clauseCnode}{\bullet}
\newcommand{\probTM}{\mathsf{Pr_\Prog^{\scriptscriptstyle{TM}}}}
\newcommand{\natN}{\mathbb{N}}
\newcommand{\fPowset}{\powset_f}
\newcommand{\treeT}{\mathcal{T}}
\newcommand{\emptyDist}{\varnothing}
\newcommand{\instanceNode}{\blacklozenge}
\newcommand{\pureProg}[1]{|#1|}
\newcommand{\eql}[1]{\stackrel{\mathclap{\mbox{$\scriptstyle #1$}}}{=}}
\newcommand{\progalarm}{{\Prog}^{\scriptscriptstyle al}}
\newcommand{\Atalarm}{{\At}_{\scriptstyle al}}
\newcommand{\palarm}{{p}_{\scriptstyle al}}
\newcommand{\intd}[1]{\langle\!\langle #1 \rangle\!\rangle}
\newcommand{\Sigal}{\Sigma_{\mathsf{al}}}
\newcommand{\LawOp}{\LawTh^\op_\Sigma}
\newcommand{\pclauses}{\widetilde{p}}
\newcommand{\saturate}[1]{{#1}^\sharp}
\newcommand{\problog}{\mathsf{ProbLog}}
\newcommand{\pd}{\mathsf{pD}}
\newcommand{\prism}{\mathsf{PRISM}}
\newcommand{\proves}{\Rightarrow}
\newcommand{\Acc}{\mathsf{Acc}}
\newcommand{\Rej}{\mathsf{Rej}}
\newcommand{\WLP}{\scriptstyle{\mathsf{WLP}}}
\newcommand{\semiK}{\mathbb{K}}
\newcommand{\semiPlus}{\oplus}
\newcommand{\semiTimes}{\otimes}
\newcommand{\semiSum}{\bigoplus}
\newcommand{\semiProd}{\bigotimes}
\newcommand{\progW}{\mathbb{W}}
\newcommand{\plusUnit}{\mathbf{0}}
\newcommand{\prodUnit}{\mathbf{1}}
\newcommand{\progShortPath}{\mathbb{P}^{sp}}
\newcommand{\progGroundSP}{\mathbb{P}^{gsp}}
\newcommand{\nonNegReal}{\mathbb{R}_{\geq 0}}
\newcommand{\weight}{\omega}
\newcommand{\multiFunc}{\mathcal{M}}
\newcommand{\wFinalSem}[2]{\llbracket #1 \rrbracket_{#2}}
\newcommand{\wFinalSemGen}[2]{\llbracket #1 \rrbracket_{#2}}
\newcommand{\cMultiFunc}{\multiFunc_{c}}
\newcommand{\coalgU}{\mathsf{u}}
  \newcommand{\whitespacearound}[1]{%
    \begin{tikzpicture}[baseline=(mainnode.base)]
      \node[inner sep=2pt,outer sep=0pt] (mainnode) {%
        \ensuremath{\scriptstyle #1}};
    \end{tikzpicture}%
  }
\begin{document}

\title[Coalgebraic Semantics for Probabilistic Logic Programming]{Coalgebraic Semantics \\ for Probabilistic Logic Programming} 

\author[T.~Gu]{Tao GU}	
\address{University College London}	
\email{tao.gu.18@ucl.ac.uk}  
\email{f.zanasi@ucl.ac.uk}  

\author[F. Zanasi]{Fabio Zanasi}	





\begin{abstract}
  \noindent Probabilistic logic programming is increasingly important in artificial intelligence and related fields as a formalism to reason about uncertainty. It generalises logic programming with the possibility of annotating clauses with probabilities. This paper proposes a coalgebraic semantics on probabilistic logic programming. Programs are modelled as coalgebras for a certain functor $\func{F}$, and two semantics are given in terms of cofree coalgebras. First, the cofree $\func{F}$-coalgebra yields a semantics in terms of derivation trees. Second, by embedding $\func{F}$ into another type $\func{G}$, as cofree $\func{G}$-coalgebra we obtain a `possible worlds' interpretation of programs, from which one may recover the usual distribution semantics of probabilistic logic programming. Furthermore, we show that a similar approach can be used to provide a coalgebraic semantics to weighted logic programming.
\end{abstract}

\maketitle

\section{Introduction}\label{sec:intro}

Probabilistic logic programming ($\PLP$) \cite{ProbLogProgPioneerI,ProbLogProgPioneerII,sato1995statistical} is a family of approaches extending the declarative paradigm of logic programming with the possibility of reasoning about uncertainty. This has been proven useful in various applications, including bioinformatics \cite{DeRaedt:2007:BioinformaticsI,Holmes:BioinformaticsII}, robotics \cite{thrun2005probabilistic} and the semantic web \cite{Zese:2017:SemanticWeb}.

The most common version of $\PLP$ --- on which for instance $\problog$ is based \cite{DeRaedt:2007:BioinformaticsI}, the probabilistic analogue of $\mathsf{Prolog}$ --- is defined by annotating clauses in programs with mutually independent probabilities. As for the interpretation, \emph{distribution semantics} \cite{sato1995statistical} is typically used as a benchmark for the various implementations of $\PLP$, such as $\pd$, $\prism$ and $\problog$ \cite{riguzzi2014}. While in logic programming the central task is whether a goal is provable using the clauses in the program as axioms, for $\PLP$ the typical question one asks is what is the probability of a goal being provable. In distribution semantics, all the clauses are seen as independent random events, and such probability is obtained as the sum of the probabilities of all the \emph{possible worlds} (sets of clauses) in which the goal is provable. The distribution semantics is particularly interesting because it is compatible with the encoding of Bayesian networks as probabilistic logic programs \cite{riguzzi2014}, thus indicating that $\PLP$ can be effectively employed for Bayesian reasoning.

The main goal of this work is to present a coalgebraic perspective on $\PLP$ and its distribution semantics. We first consider the case of ground programs (Section \ref{sec:groundcase}), that is, programs without variables. Our approach is based on the observation --- inspired by the coalgebraic treatment of `pure' logic programming \cite{komendantskaya2010coalgebraic} --- that ground programs are in 1-1 correspondence with coalgebras for the functor $\F\powset_f$, where $\F$ is the finite multiset functor on $[0,1]$ and $\powset_f$ is the finite powerset functor. We then provide two coalgebraic semantics for ground $\PLP$.
\begin{itemize}
\item 	The first interpretation $\intp{-}$ is in terms of execution trees called \emph{stochastic derivation trees}, which represent parallel SLD-derivations of a program on a goal. Stochastic derivation trees are the elements of the cofree $\F\powset_f$-coalgebra on a given set of atoms $\At$, meaning that any goal $A \in \At$ can be given a semantics in terms of the corresponding stochastic derivation tree by the universal map $\intp{-}$ to the cofree coalgebra.
\item The second interpretation $\intd{-}$ recovers the usual distribution semantics of $\PLP$. This requires some work, as expressing probability distributions on the possible worlds needs a different coalgebra type. We introduce \emph{distribution trees}, a tree-like representation of the distribution semantics, as the elements of the cofree $\G\powset_f$-coalgebra on $\At$, where $\SubDist$ is the sub-probability distribution monad. In order to characterise $\intd{-}$ as the map given by universal property of distribution trees, we need a canonical extension of $\PLP$ to the setting of $\G\powset_f$-coalgebras. This is achieved via a `possible worlds' natural transformation $\F\powset_f \To \G\powset_f$.
\end{itemize}

In the second part of the paper (Section \ref{sec:general_case}) we recover the same framework for arbitrary probabilistic logic programs, possibly including variables. The encoding of programs as coalgebras is subtler. The space of atoms is now a presheaf indexed by a `Lawvere theory' of terms and substitutions. The coalgebra map can be defined in different ways, depending on the substitution mechanism on which one wants to base resolution. For pure logic programs, the definition by term-matching is the best studied, with~\cite{komendantskaya2011coalgebraic} observing that moving from sets to posets is required in order for the corresponding coalgebra map to be well-defined as a natural transformation between presheaves. A different route is taken in~\cite{BonchiZ15}, where the problem of naturality is neutralised via `saturation', a categorical construction which amounts to defining resolution by unification instead of term-matching.

In developing a coalgebraic treatment of $\PLP$ with variables, we follow the saturation route, as it also allows to recover the term-matching approach, via `desaturation'~\cite{BonchiZ15}. This provides a cofree coalgebra semantics $\intp{-}$ for arbitrary $\PLP$ programs, as a rather straightforward generalisation of the saturated semantics for pure logic programs. On the other hand, extending the ground distribution semantics $\intd{-}$ to arbitrary $\PLP$ programs poses some challenges: we need to ensure that, in computing the distribution over possible worlds associated to each sub-goal in the computation, each clause of the program is `counted' only once. This is solved by tweaking the coalgebra type of the distribution trees for arbitrary $\PLP$ programs, so that some nodes are labelled with clauses of the program. Thanks to this additional information, the term-matching  distribution semantics of an arbitrary $\PLP$ goal is computable from its distribution tree.

After developing our framework for $\PLP$, in the last part of the paper (Section \ref{sec:WLP}) we show how the same approach yields a coalgebraic semantics for weighted logic programming ($\WLP$). $\WLP$ generalises standard logic programming by adding weights to clauses; it is mostly used in the specification of dynamic programming algorithms in various fields, including natural language processing \cite{eisner2007} and computational biology \cite{durbin1998}. Our coalgebraic description yields a derivation semantics $\intp{-}$ both for ground and arbitrary $\WLP$ programs, from which one may compute the weight associated with a goal.


In light of the coalgebraic treatment of pure logic programming~\cite{komendantskaya2010coalgebraic,komendantskaya2011coalgebraic,bonchi2013saturated,BonchiZ15}, the generalisation to $\PLP$ and $\WLP$ may not appear so surprising. In fact, we believe its importance is two-fold. First, whereas the derivation semantics $\intp{-}$ is a straight generalisation of the pure setting, the distribution semantics $\intd{-}$ is genuinely novel, and does not have counterparts in pure logic programming. Second, this work provides a starting point for a generic coalgebraic treatment of variations of logic programming:
\begin{itemize}
	\item The coalgebraic approach to pure logic programming has been used as a formal justification \cite{KomendantskayaP16,KomendantskayaL17} for coinductive logic programming \cite{Komendantskaya18PositionPaper,GuptaCoinductiveLogicProgramming}. Coinduction in the context of probabilistic and weighted logic programs is, to the best of our knowledge, a completely unexplored field, for which the current paper establishes semantic foundations.
	\item As mentioned, reasoning in Bayesian networks can be seen as a particular case of $\PLP$, equipped with the distribution semantics. Our coalgebraic perspective thus readily applies to Bayesian reasoning, paving the way for combination with recent works \cite{JacobsKZ19,JacobsZ19, DahlqvistDGK16} modelling belief revision, causal inference and other Bayesian tasks in algebraic terms.
\end{itemize}

\noindent We leave the exploration of these venues as follow-up work.

\medskip

\noindent This work extends the conference paper \cite{GuZ19} with the addition of the missing proofs (Appendix \ref{app:missing-proofs}), and novel material on weighted logic programming (Section \ref{sec:WLP} and Appendix~\ref{app:compute-weight}).

\section{Preliminaries}\label{sec:preliminaries}

\noindent\textbf{Signature, Terms, and Categories.} A \textit{signature} $\Sigma$ is a set of function symbols, each equipped with a fixed finite arity. Throughout this paper we fix a signature $\Sigma$, and a countably infinite set of variables $\var = \{x_1, x_2,\dots\}$. The $\Sigma$-terms over $\var$ are defined as usual. A \textit{context} is a finite sequence of variables $\lr{x_1, x_2,\dots, x_n }$. With some abuse of notation, we shall often use $n$ to denote this context. We say a $\Sigma$-term $t$ is \emph{compatible} with context $n$ if the variables appearing in $t$ are all contained in $\{x_1,\dots, x_n \}$.

We are going to reason about $\Sigma$-terms categorically using Lawvere theories. First, we will use $\ob(\catC)$ to denote the set of objects and $\catC[C,D]$ for the set of morphisms $C\to D$ in a category $\catC$. A $\catC$-indexed \textit{presheaf} is a functor $\func{F} \colon \catC\to \sets$. $\catC$-indexed presheaves and natural transformations between them form a category $\sets^\catC$. Recall that the (opposite) Lawvere Theory of $\Sigma$ is the category $\LawTh^\op_\Sigma$ with objects the natural numbers and morphisms $\LawTh^\op_\Sigma[n,m]$ the n-tuples $\lr{t_1,\dots, t_n}$, where each $t_i$ is a $\Sigma$-term in context $m$.  For modelling logic programming, it is convenient to think of each $n\in \ob(\LawTh^\op_\Sigma)$ as representing the context $\lr{x_1,\dots, x_n }$, and a morphism $\lr{t_1,\dots, t_n} \colon n\to m$ as the substitution transforming $\Sigma$-terms in context $n$ to $\Sigma$-terms in context $m$ by replacing each $x_i$ with $t_i$. For this reason we shall also refer to $\LawTh^\op_\Sigma$ morphisms simply as substitutions (notation $\theta,\tau,\sigma, \dots$).

\smallskip
\noindent\textbf{Logic programming.} We now recall the basics of (pure) logic programming, and refer the reader to \cite{Lloyd87} for a more systematic exposition. An \emph{alphabet} $\alphabet$ consists of a signature $\Sigma$, a set of variables $\var$, and a set of predicate symbols $\{P_1, P_2,\dots\}$, each with a fixed finite arity. Given an $n$-ary predicate symbol $P$ in $\alphabet$, and $\Sigma$-terms $t_1,\dots, t_n$, $P(t_1\cdots t_n)$ is called an \emph{atom} over $\alphabet$. We use $A,B,\dots$ to denote atoms. Given an atom $A$ in context $n$, and a substitution $\theta = \lr{t_1,\dots, t_n} \colon n\to m$, we write $A\theta$ for the \textit{substitution instance} of $A$ obtained by replacing each appearance of $x_i$ with $t_i$ in $A$. For convenience, we also use $\{B_1,\dots, B_k\}\theta$ as a shorthand for $\{B_1\theta, \dots, B_k\theta\}$. Given two atoms $A$ and $B$ (over $\alphabet$), a \textit{unifier} of $A$ and $B$ is a pair $\lr{\sigma, \tau}$ of substitutions such that $A\sigma = B\tau$. \textit{Term matching} is a special case of unification, where $\sigma$ is the identity substitution. In this case we say that $\tau$ matches $B$ with $A$ if $A = B\tau$.

A (pure) logic program $\progL$ consists of a finite set of clauses $\clauseC$ in the form $H\ot B_1,\dots, B_k$, where $H,B_1,\dots, B_k$ are atoms. $H$ is called the \emph{head} of $\clauseC$, and $B_1,\dots, B_k$ form the \emph{body} of $\clauseC$. We denote $H$ by $\head{\clauseC}$, and $\{B_1,\dots, B_k\}$ by $\body{\clauseC}$. A clause $\clauseC$ with empty body is also called a \emph{fact}. A \emph{goal} is simply an atom, and we use this terminology in the context of certain logic programming reasoning tasks. Since one can regard a clause $H\ot B_1,\dots, B_k$ as the logic formula $B_1 \land \cdots \land B_k \to H$, we say that a goal $G$ is \emph{derivable} in $\progL$ if there exists a derivation of $G$ with empty assumption using the clauses in $\progL$.

The central task of logic programming is to check whether an atom $G$ is \emph{provable} in a program $\progL$, in the sense that some substitution instance of $G$ is derivable in $\progL$. The key algorithm for this task is SLD-resolution, see e.g. \cite{Lloyd87}. We use the notation $\progL\vdash G$ to mean that $G$ is provable in $\progL$.

\smallskip
\noindent\textbf{Probabilistic logic programming.} We now recall the basics of $\PLP$; the reader may consult \cite{de2007problog, DeRaedt:2007:BioinformaticsI} for a more comprehensive introduction. A probabilistic logic program $\Prog$ based on a logic program $\progL$ assigns a probability label $r$ to each clause $\clauseC$ in $\progL$, denoted as $\lab{\clauseC}$. One may also regard $\Prog$ as a set of probabilistic clauses of the form $r::\clauseC$, where $\clauseC$ is a clause in $\progL$, and each clause $\clauseC$ is assigned a unique probability label $r$ in $\Prog$. We also refer to $r::\clauseC$ simply as clauses. 

\begin{exa}\label{ex:ground}
As our leading example we introduce the following probabilistic logic program $\progalarm$. It models the scenario of Mary's house alarm, which is supposed to detect burglars, but it may be accidentally triggered by an earthquake. Mary may hear the alarm if she is awake, but even if the alarm is not sounding, in case she experiences an auditory hallucination (paracusia). The language of $\progalarm$ includes $0$-ary predicates $\sf{Alarm}$, $\sf{Eearthquake}$, $\sf{Burglary}$, and $1$-ary predicates $\sf{Wake}(-)$, $\sf{Hear\_alarm}(-)$ and $\sf{Paracusia}(-)$, and signature $\Sigal = \{\mathsf{Mary}^0\}$ consisting of a constant. We do not have variables here, so $\progalarm$ is a ground program. For readability we abbreviate $\mathsf{Mary}$ as $\mathsf{M}$ in the program.

\medskip

\begin{tabular}{rll|rll}
$0.01 :: $ & $\sf{Earthquake}$ & $\ot $ &
$0.01 :: $ & $\sf{Paracusia(M)}$ & $\ot $  \\
$0.2 ::$ & $\sf{Burglary}$ & $\ot$&
$0.6 :: $ & $\sf{Wake(M)}$ & $\ot$ \\
$0.5:: $ & $\sf{Alarm}$ & $\ot \sf{Earthquake}$ &
$0.8 :: $ & $\sf{Hear\_alarm(M)}$ & $\ot \sf{Alarm},~ \sf{Wake(M)}$ \\
$0.9 :: $ & $\sf{Alarm}$ & $\ot \sf{Burglary}$ &
$0.3 :: $ & $\sf{Hear\_alarm(M)}$ & $\ot \sf{Paracusia(M)}$
\end{tabular}
\end{exa}

As a generalisation of the pure case, in probabilistic logic programming one is interested in the \emph{probability} of a goal $G$ being provable in a program $\Prog$. There are potentially multiple ways to define such probability --- in this paper we focus on Sato's \textit{distribution semantics} \cite{sato1995statistical} as below.

Given a probabilistic logic program $\Prog = \{p_1:: \clauseC_1, \dots, p_n:: \clauseC_n \}$, let $\pureProg{\Prog}$ be its underlying pure logic program, namely $\pureProg{\Prog} = \{\clauseC_1,\dots, \clauseC_n\}$. A \emph{sub-program} $\progL$ of $\pureProg{\Prog}$ is a logic program consisting of a subset of the clauses in $\pureProg{\Prog}$. This justifies using $\powset(\pureProg{\Prog})$ to denote the set of all sub-programs of $\pureProg{\Prog}$, and using $\progL\subseteq \pureProg{\Prog}$ to denote that $\progL$ is a sub-program of $\Prog$. The central concept of the distribution semantics is that $\Prog$ determines a distribution $\mu_{\Prog}$ over the sub-programs $\powset(\pureProg{\Prog})$: for any $\progL\in \powset(\pureProg{\Prog})$, 
\[
\progPr_{\Prog}(\progL) \coloneqq \left( \prod_{\clauseC_i \in \progL} p_i \right) \cdot \left( \prod_{\clauseC_j \in \pureProg{\Prog} \setminus \progL} (1-p_j) \right)
\]
where each $p_i$ is the probability label of the clause $\clauseC_i$ in $\progP$. We simply refer to the value $\mu_{\Prog}(\progL)$ as the \textit{probability} of the sub-program $\progL$. For an arbitrary goal $G\in \At$, the \textit{success probability} (or simply the probability) $\Pr_\Prog(G)$ of $G$ w.r.t. program $\Prog$ is then defined as the sum of the probabilities of all the sub-programs of $\Prog$ in which $G$ is provable:

\begin{equation}\label{eq:distrsemantics}
\Pr_\Prog(G)  := \sum_{\pureProg{\Prog} \supseteq \progL \, \vdash \, G} \progPr_\Prog(\progL)  = \sum_{\pureProg{\Prog} \supseteq \progL \, \vdash \, G} \left( \prod_{\clauseC_i \in \progL} p_i \cdot \prod_{\clauseC_j \in \pureProg{\Prog} \setminus \progL} (1-p_j) \right)
\end{equation}
Intuitively one can regard every clause in $\Prog$ as a random event, then every sub-program $\progL$ can be seen as a possible world over these events, and $\mu_\Prog$ is a distribution over all the possible worlds. The success probability $\Pr_\Prog(G)$ is then simply the probability of all the possible worlds in which the goal $G$ is provable.
\begin{exa}
For the program $\progalarm$, consider the goal $\sf{Hear\_alarm(M)}$. By the definition of distribution semantics \eqref{eq:distrsemantics}, we can compute that the success probability $\Pr_{\progalarm}(\sf{Hear\_alarm(M)})$ is $\approx 0.0911$.
\end{exa}

\section{Ground case}\label{sec:groundcase}
In this section we introduce a coalgebraic semantics for \emph{ground} probabilistic logic programming, i.e. for those programs where no variable appears. Our approach consists of two parts. First, we represent $\PLP$ programs as coalgebras (Subsection \ref{sec:coalg_groundcase}) and their executions as a final coalgebra semantics (Subsection \ref{sec:finalcoalg_groundcase}) --- this is a straightforward generalisation of the coalgebraic treatment of pure logic programs given in \cite{komendantskaya2010coalgebraic}. Next, in Subsection \ref{sec:distrsem_groundcase} we investigate how to represent the distribution semantics as a final coalgebra, via a transformation of the coalgebra type of $\PLP$ programs. Appendix \ref{sec:algo_ground} shows how the success probability of a goal is effectively computable from the above representation.

\subsection{Coalgebraic Representation of $\PLP$}\label{sec:coalg_groundcase}
A ground $\PLP$ program can be represented as a coalgebra for the composite $\F \powset_f\colon \sets \to \sets$ of the finite probability functor $\F  \colon \sets \to \sets$ and the finite powerset functor $\powset_f  \colon \sets \to \sets$. The definition of $\F$ deserves some further explanation. It can be seen as the finite multiset functor based on the commutative monoid $([0,1], 0, \por)$, where $\por$ is the probabilistic `or' for independent events defined as $a\por b \coloneqq 1- (1-a)(1-b)$. That is to say, on objects, $\F(A)$ is the set of all \textit{finite probability assignments} $\phi \colon A\to [0,1]$, namely those $\phi$ with a finite support $\supp(\phi) \coloneqq \{a\in A \mid \phi(a) \neq 0\}$. For $\phi$ with  support $\{a_1,\dots, a_k\}$ and values $\phi(a_i) = r_i$, it will often be convenient to use the standard notation $\phi = r_1 a_1 + \cdots + r_k a_k$ or $\phi = \sum_{i=1}^k r_i a_i$, where the purely formal `$+$' and `$\sum$' here should not be confused with the arithmetic addition. On morphisms, $\F(h \colon A\to B)$ maps $\sum_{i=1}^k r_i a_i$ to $\sum_{i=1}^k r_i h(a_i)$.

Fix a ground probabilistic logic program $\Prog$ on a set of ground atoms $\At$. The definition of $\Prog$ can be encoded as an $\F \powset_f$-coalgebra $p \colon \At\to \F \powset_f(\At)$, as follows. Given $A\in \At$,
\begin{equation*}
\begin{array}{ccl}
 	p(A)  \colon  & \powset_f(\At) & \to \ [0,1] \\[0.5em]
 	& \{B_1,\dots, B_n\} & \mapsto  \
 	\begin{cases}
 		r \quad \text{if } r:: A \ot B_1,\dots, B_n \text{ is a clause in } \Prog \\
 	   0 \quad \text{otherwise.}
	\end{cases}
\end{array}
\end{equation*}
Or, equivalently,
$p(A) \coloneqq \sum\limits_{(r:: A \ot B_1,\dots, B_n) \ \in \ \Prog} r \{B_1,\dots, B_n\}$. Note that each $p(A)$ has a finite support because the program $\progP$ consists of finitely many clauses.

\begin{exa}
Consider program $\progalarm$ from Example~\ref{ex:ground}. The set of ground atoms $\Atalarm$ is $\{\sf{Alarm}, \sf{Earthquake},$ $ \sf{Burgary},\sf{Wake(M)}, \sf{Paracusia(M)}, \sf{Hear\_alarm(M)}\}$. Here are some values of the corresponding coalgebra $\palarm \colon \Atalarm \to \F \powset_f \Atalarm$:
\begin{align*}
\hspace{-.5cm}\palarm(\mathsf{Hear\_alarm(M)}) & = 0.8 \{\mathsf{Alarm}, \mathsf{Wake(M)}\} + 0.3 \{\mathsf{Paracusia(M)}\} \\
\palarm(\mathsf{Earthquake}) & = 0.01 \{\}
\end{align*}
\end{exa}

\begin{rem}
One might wonder why not simply adopt $\powset_f (\powset_f(-) \times [0,1])$ as the coalgebra type for $\PLP$. The reason is that, although every ground $\PLP$ program generates a $\powset_f (\powset_f(-) \times [0,1])$-coalgebra, such encoding fails to be a 1-1 correspondence: a clause $\clauseC \in \powset_f(\At)$ may be associated with different values in $[0,1]$, which violates the standard definition of $\PLP$ programs.
\end{rem}

\subsection{Derivation Semantics}\label{sec:finalcoalg_groundcase}
In this section we are going to construct the final $\At\times \F \powset_f (-)$-coalgebra, thus providing a semantic interpretation for probabilistic logic programs based on $\At$.

Before the technical developments, we give an intuitive view on the semantics that the final coalgebra is going to provide. We shall represent each goal as a \emph{stochastic derivation tree} in the final $\At\times \F \powset_f (-)$-coalgebra. These trees are the probabilistic version of and-or derivation trees, which represent parallel SLD-resolutions for pure logic programming \cite{gutpa1994}. One can view a stochastic derivation trees as the unfolding of a goal under a $\PLP$ program. 

\begin{defi}[Stochastic derivation trees] \label{def:stochastictrees_ground}
Given a ground $\PLP$ program $\Prog$ based on $\At$, and an atom $A\in \At$, the \emph{stochastic derivation tree} for $A$ in $\Prog$ is the possibly infinite tree $\mathcal{T}$ such that:
\begin{enumerate}
	\item Every node is either an atom-node (labelled with an atom $A' \in \At$) or a clause-node (labelled with $\bullet$). These two types of nodes appear alternatingly in depth, in this order. In particular, the root is an atom-node labelled with $A$.
	\item Each edge from an atom-node to its (clause-)children is labelled with a probability value.
    \item Suppose $s$ is an atom-node with label $A'$. Then for every clause $r:: A' \ot B_1,\dots, B_k$ in $\Prog$, $s$ has exactly one child $t$ such that the edge $s \to t$ is labelled with $r$, and $t$ has exactly $k$ children labelled with $B_1,\dots, B_k$, respectively.
\end{enumerate}
\end{defi}

The final coalgebra semantics $\intp{-}_p$ for a program $\Prog$ will map a goal $A$ to the stochastic derivation tree representing all possible SLD-resolutions of $A$ in $\Prog$.

\begin{exa}\label{example_base_and-or}
Continuing Example \ref{ex:ground}, $\intp{{\sf{Hear\_alarm(M)}}}_{\palarm}$ is the stochastic derivation tree below. The subtree highlighted in red represents one of the successful proofs of $\sf{Hear\_alarm(M)}$ in $\palarm$: indeed, note that a single child is selected for each atom-node $A$ (corresponding to a clause matching $A$), all children of any clause-node are selected (corresponding to the atoms in the body of the clause), and the subtree has clause-nodes as leaves (all atoms are proven).
\begin{equation}
	\label{eq:exampletreeground}
\vcenter{
\xymatrix@R=5pt{
&&& \sf{Hear\_Alarm(M)} \ar@{-}@[red][dl]_{0.8} \ar@{-}[dr]^{0.3}  & \\
&& \bullet \ar@{-}@[red][dl] \ar@{-}@[red][dr] && \bullet \ar@{-}[d] \\
& \sf{Alarm} \ar@{-}@[red][dl]_{0.5} \ar@{-}[dr]^{0.9} && \sf{Wake(M)} \ar@{-}@[red][d]_{0.6} & \sf{Paracusia(M)} \ar@{-}[d]^{0.01} \\
\bullet \ar@{-}@[red][d] && \bullet\ar@{-}[d] & \bullet & \bullet \\
\sf{Earthquake} \ar@{-}@[red][d]_{0.01} && \sf{Burglary} \ar@{-}[d]_{0.2} && \\
\bullet && \bullet &&
}
}
\end{equation}
Any such subtree describes a proof, but does not yield a probability value to be associated to a goal --- this is the remit of the distribution semantics, see Example \ref{ex:distrsemantics_ground} below.
\end{exa}

In the remaining part of the section, we construct the cofree coalgebra for $\F \powset_f$ via a so-called terminal sequence \cite{worrell1999terminal}, and obtain $\intp{-}_p$ from the resulting universal property. We report the steps of the terminal sequence as they are instrumental in showing that the elements of the cofree coalgebra can be seen as stochastic derivation trees.



\begin{construction}\label{constr:terminalseq_ground}
The terminal sequence for the functor $\At \times \F \powset_f (-): \sets\to \sets$ consists of sequences of objects $\{X_\alpha\}_{\alpha\in \Ord}$ and arrows $\{\delta^\alpha_\beta \colon X_\alpha \to X_\beta\}_{\beta<\alpha\in \Ord}$ constructed by the following transfinite induction:
\begin{align*}
X_\alpha & :=
\begin{cases}
\At & \alpha = 0 \\
\At \times \F \powset_f (X_\xi) & \alpha = \xi+1 \\
\limit\{\delta^\chi_\xi \mid \xi < \chi < \alpha\} & \alpha ~\text{is limit}~
\end{cases} \\
\delta^\alpha_\beta & :=
\begin{cases}
\pi_1 & \alpha = 1, \beta = 0 \\
\id_\At \times \F \powset_f (\delta^{\xi+1}_{\xi}) & \alpha = \beta + 1 = \xi + 2 \\
\text{the limit projections} & \alpha~\text{is limit}, \beta<\alpha\\
\text{universal map to $X_{\beta}$} & \beta~\text{is limit}, \alpha = \beta + 1
\end{cases}
\end{align*}
\end{construction}

\begin{prop}\label{cor:terminalseq_ground_ok}
The terminal sequence for the functor $\At \times \F \powset_f(-)$ converges to a limit $X_\gamma$ such that $X_\gamma \iso X_{\gamma+1}$.
\end{prop}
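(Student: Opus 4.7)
My plan is to deduce the convergence of the terminal sequence from a general theorem about accessible endofunctors on $\sets$, after verifying that the functor $\At \times \F \powset_f(-)$ falls within its scope. The crux is showing that $\F \powset_f$ is finitary (i.e., preserves filtered colimits), since then Worrell's theorem on terminal sequences~\cite{worrell1999terminal} yields convergence at some ordinal $\gamma \leq \omega + \omega$.

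First I would check finitariness of each component. For $\powset_f$, this is standard: every finite subset $S \subseteq X$ factors through a finite subdiagram of any filtered colimit cocone for $X$. For $\F$, the key observation is that by definition every $\phi \in \F(X)$ has a \emph{finite} support $\supp(\phi) \subseteq X$; hence if $\iota \colon \supp(\phi) \hookrightarrow X$ denotes the inclusion, then $\phi = \F(\iota)(\phi|_{\supp(\phi)})$. In other words, every element of $\F(X)$ factors through a finite subobject of $X$, from which finitariness follows by the standard argument (exactly as for $\powset_f$). The fact that the codomain $[0,1]$ of probability assignments is uncountable is irrelevant; what matters is that the support is finite.

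Next I would compose: finitary endofunctors on $\sets$ are closed under composition, so $\F \powset_f$ is finitary. Moreover, $\At \times (-)$ is a left adjoint (to $(-)^\At$), hence cocontinuous, and in particular preserves filtered colimits. Therefore $\At \times \F \powset_f(-)$ is finitary.

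Finally I would invoke Worrell's theorem: for any $\kappa$-accessible endofunctor on $\sets$ (in particular, any finitary one), the terminal sequence $\{X_\alpha\}_{\alpha \in \Ord}$ of Construction~\ref{constr:terminalseq_ground} stabilizes at some ordinal $\gamma$, in the sense that the connecting map $\delta^{\gamma+1}_{\gamma} \colon X_{\gamma+1} \to X_\gamma$ is an isomorphism. This gives the required $X_\gamma \iso X_{\gamma+1}$. I expect the main subtle point to be the verification of finitariness of $\F$, where one must resist the temptation to think the real-valued codomain is a problem: once one recognises that finite support is the only condition that matters, everything reduces to standard categorical machinery.
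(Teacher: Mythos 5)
Your proposal follows essentially the same route as the paper: build the terminal sequence, check that the relevant functor is accessible, and invoke Worrell's convergence theorem; the treatment of $\At \times (-)$ and the finite-support argument for accessibility of $\F$ match the paper's reasoning (the paper simply cites the literature for accessibility of $\F$ and $\powset_f$ rather than arguing it directly). The one substantive difference is which form of Worrell's theorem you lean on. The result the paper quotes from \cite{worrell1999terminal} (Corollary 3.3) is stated for accessible endofunctors on a locally presentable category \emph{that preserve monomorphisms}, and accordingly the paper spends most of its proof verifying mono-preservation for $\F$ (and noting it for $\powset_f$, plus closure under composition). You instead assert that accessibility alone suffices on $\sets$, with convergence by $\omega+\omega$; that is indeed a true theorem for finitary set functors (Worrell's later, Set-specific result, or alternatively one observes that set functors preserve monomorphisms with nonempty domain, so the hypothesis is essentially harmless there), so your argument is mathematically sound. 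But as written it does not match the statement in the reference you cite: to use that Corollary 3.3 you would need to add the (easy) check that $\F\powset_f$ and $\At\times(-)$ preserve monics, exactly as the paper does, or else explicitly invoke the mono-free Set version. So: same approach, correct conclusion, with a small citation-level mismatch that is straightforward to patch; your approach buys a slightly shorter proof and an explicit $\omega+\omega$ bound, while the paper's buys a proof that literally instantiates the quoted general result and generalises beyond $\sets$.
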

\begin{proof} We apply the following result from \cite{worrell1999terminal}:
\begin{propC}[{\cite[Corollary 3.3]{worrell1999terminal}}]\label{prop:worrell-cor}
If $T$ is an accessible endofunctor on a locally presentable category, and if $T$ preserves monics, then the terminal $T$-sequence $\{A_\alpha, f^\alpha_\beta\}$ converges, necessarily to a terminal $T$-coalgebra.
\end{propC}
Since $\sets$ is a locally presentable category, it remains to verify the two conditions for $\F \fPowset$. It is well-known that $\powset_f$ is $\omega$-accessible, and $\F$ has the same property, see e.g. \cite[Prop. 6.1.2]{SilvaThesis}. Because accessibility is defined in terms of colimit preservation, it is clearly preserved by composition, and thus $\F\powset_f$ is also accessible. It remains to check that it preserves monics. For $\F$, given any monomorphism $i: C\to D$ in $\sets$, suppose $\F(i)(\phi) = \F(i)(\phi')$ for some $\phi,\phi'\in \F(C)$. Then for any $d\in D$, $\F(i)(\phi)(d) = \F(i)(\phi')(d)$. If we focus on the image $i[C]$, then there is an inverse function $i^{-1}: i[C]\to C$, and $\F(i)(\phi)= \F(i)(\phi')$ implies that $\phi(i^{-1}(d)) = \phi'(i^{-1}(d))$ for any $d\in i[C]$. But this simply means that $\phi = \phi'$. As the same is true for $\powset_f$ and the property is preserved by composition, we have that $\F \powset_f$ preserves monics. Therefore we can conclude that the terminal sequence for $At \times \F\powset_f$ converges to the cofree $\F\powset_f$-coalgebra on $\At$.
\end{proof}


Note that $X_{\gamma +1}$ is defined as $\At \times \F \powset_f (X_{\gamma})$, and the above isomorphism makes $X_{\gamma} \to \At \times \F \powset_f (X_{\gamma})$ the final $\At \times \F \powset_f$-coalgebra --- or, in other words, \emph{cofree $\F \powset_f$-coalgebra} on $\At$. As for the tree representation of the elements of $X_{\gamma}$, recall that elements of the cofree $\powset_f \powset_f$-coalgebra on $\At$ can be seen as and-or trees \cite{komendantskaya2010coalgebraic}. By replacing the first $\powset_f$ with $\F$, effectively one adds probability values to the edges from and-nodes to or-nodes (which are edges from atom-nodes to clause-nodes in our stochastic derivation trees), as in \eqref{eq:exampletreeground}. Thus stochastic derivation trees as in Definition~\ref{def:stochastictrees_ground} are elements of $X_{\gamma}$. The action of the coalgebra map $\iso \colon X_{\gamma} \to \At \times \F \powset_f(X_{\gamma})$ is best seen with an example: the tree $\mathcal{T}$ in \eqref{eq:exampletreeground} (as an element of $X_{\gamma}$) is mapped to the pair $\lr{\mathsf{Hear\_alarm(M)},\phi}$, where $\phi$ is the function $\powset_f(X_{\gamma}) \to [0,1]$ assigning $0.8$ to the set consisting of the subtrees of $\mathcal{T}$ with root $\mathsf{Alarm}$ and with root $\mathsf{Wake(M)}$, $0.3$ to the singleton consisting to the subtree of $\mathcal{T}$ with root $\mathsf{Paracusia(M)}$, and $0$ to any other finite set of trees.

With all the definitions at hand, it is straightforward to check that $\intp{-}_p$ mapping $A \in \At$ to its stochastic derivation tree in $\progP$ makes the following diagram commute
\begin{equation*}
\xymatrix@R=15pt@C+75pt{
\At \ar@{-->}[r]^{\intp{-}_p} \ar[d]^{<id,p>} & X_{\gamma} \ar[d]^{\iso} \\
\At \times \F \powset_f(\At) \ar[r]^{id \times \F \powset_f(\intp{-}_p)} & \At \times \F \powset_f(X_{\gamma})
}
\end{equation*}
 and thus by uniqueness it coincides with the $\At \times \F \powset_f$-coalgebra map provided by the universal property of the final $\At \times \F \powset_f$-coalgebra $X_{\gamma} \to \At \times \F \powset_f (X_\gamma)$.

\subsection{Distribution Semantics}\label{sec:distrsem_groundcase}
This section gives a coalgebraic representation of the usual \emph{distribution semantics} of probabilistic logic programming. As in the previous section, before the technical developments we gather some preliminary intuition. Recall from Section \ref{sec:preliminaries} that the core of the distribution semantics is the probability distribution over the sub-programs (subsets of clauses) of a given program $\Prog$. These sub-programs are also called (possible) worlds, and the distribution semantics of a goal is the sum of the probabilities of all the worlds in which it is provable.

In order to encode this information as elements of a final coalgebra, we need to present it in tree-shape. Roughly speaking, we form a distribution over the sub-programs along the execution tree. This justifies the following notion of \emph{distribution trees}.

\begin{defi}[Distribution trees] \label{def:distrtree_ground}
Given a $\PLP$ program $\Prog$ and an atom $A$, the \emph{distribution tree} for $A$ in $\Prog$ is the possibly infinite tree $\mathcal{T}$ satisfying the following properties:
\begin{enumerate}
	\item Every node is exactly one of the three kinds: atom-node (labelled with an atom $A' \in \At$), world-node (labelled with $\circ$), clause-node (labelled with $\bullet$). They appear alternatingly in this order in depth. In particular the root is an atom-node labelled with $A$.
	\item Every edge from an atom-node to its (world-)children is labelled with a probability value, and the labels on all the edges (if exist) from the same atom-node sum up to $1$.
	\item Suppose $s$ is an atom-node labelled with $A'$, and $C = \{ \clauseC_1,\dots, \clauseC_m \}$ is the set of all the clauses in $\Prog$ whose head is $A'$. Then $s$ has $2^m$ (world-)children, each standing for a subset $X$ of $C$. If a child $t$ stands for $X$, then the edge $s \to t$ is labelled with probability $\prod_{\clauseC \in X} \lab{\clauseC} \cdot \prod_{\clauseC'\in C\setminus X} (1 - \lab{\clauseC'})$ --- recall that $\lab{\clauseC}$ is the probability labelling $\clauseC$. Also, $t$ has exactly $|X|$-many (clause-)children, each standing for a clause $\clauseC\in X$. If a child $u$ stands for $\clauseC = r:: A'\ot B_1,\dots, B_k$, then $u$ has $k$ (atom-)children, labelled with $B_1,\dots, B_k$ respectively.
\end{enumerate}
\end{defi}

Comparing distribution trees with stochastic derivation trees (Definition \ref{def:stochastictrees_ground}), one can observe the addition of another class of nodes, representing possible worlds. Moreover, the possible worlds associated with an atom-node must form a probability distribution --- as opposed to stochastic derivation trees, in which probabilities labelling parallel edges do not need to share any relationship. An example of the distribution tree associated with a goal is provided in the continuation of our leading example (Examples \ref{ex:ground} and \ref{example_base_and-or}).
\begin{exa}\label{ex:distrsemantics_ground}
In the context of Example \ref{ex:ground}, the distribution tree of $\sf{Hear\_alarm(M)}$ is depicted below, where we use grey shades to emphasise sets of edges expressing a probability distribution. Also, note the $\circ$s with no children, standing for empty worlds (namely worlds containing no clause).
\begin{equation*}
\hspace{-1cm}\includegraphics[height=6cm]{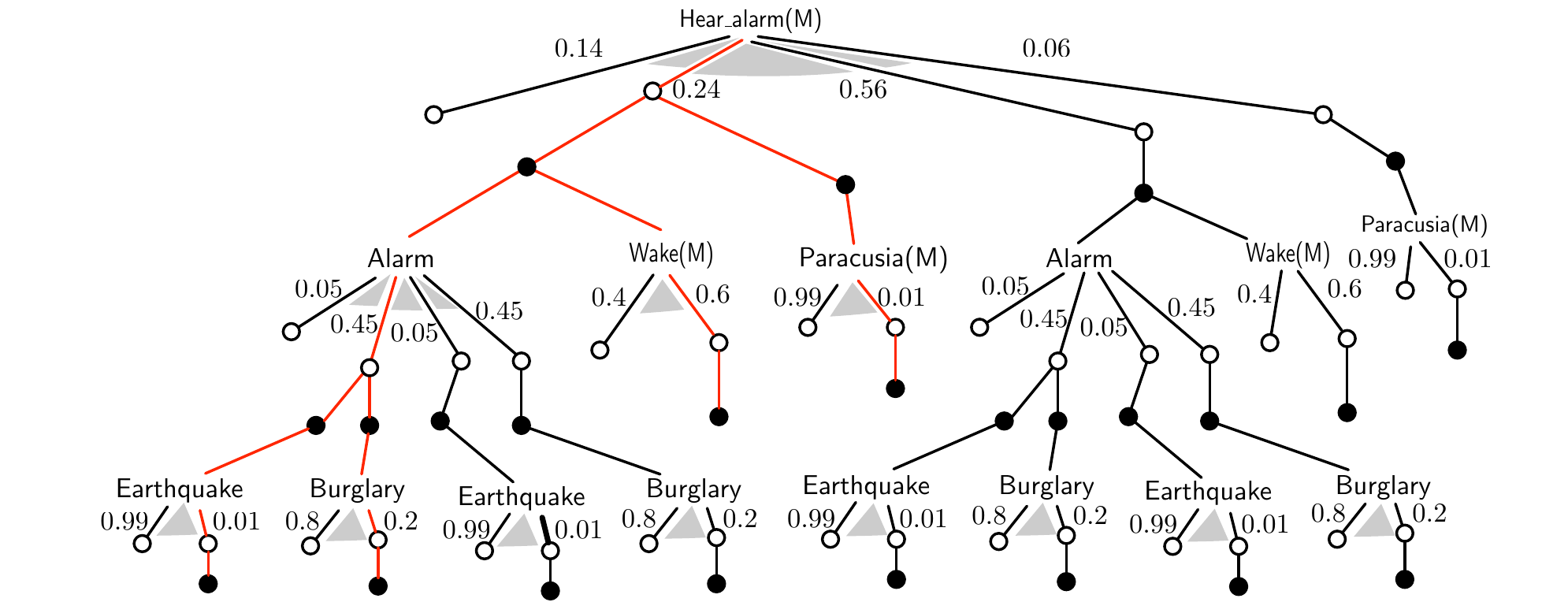}.
\end{equation*}
In the literature, the distribution semantics usually associates with a goal a single probability value~\eqref{eq:distrsemantics}, rather than a whole tree. However, given the distribution tree it is straightforward to compute such probability. The subtree highlighted in red above describes a refutation of $\sf{Hear\_alarm(M)}$ with probability 0.000001296 ( = the product of all the probabilities in the subtree). The sum of all the probabilities associated to such `proof' subtrees yields the usual distribution semantics~\eqref{eq:distrsemantics} --- the computation is shown in detail in Appendix~\ref{sec:algo_ground}.
\end{exa}

In the remainder of this section, we focus on the coalgebraic characterisation of distribution trees and the associated semantics map. Our strategy will be to introduce a novel coalgebra type $\G \powset_f$, such that distribution trees can be seen as elements of the cofree coalgebra. Then, we will provide a natural transformation $\F \To \G$, which can be used to transform stochastic derivation trees into distribution trees. Finally, composing the universal properties of these cofree coalgebras will yield the desired distribution semantics.



We begin with the definition of $\G$. This is simply the composite $\SubDist\powset_f \colon \sets\to \sets$, where $\SubDist$ is the \emph{sub-probability distribution} functor. Recall that $\SubDist$ maps $X$ to the set of sub-probability distributions with finite supports on $X$ (i.e., convex combinations of elements of $X$ whose sum is less or equal to $1$), and acts component-wise on functions.

\begin{rem} Note that we cannot work with full probabilities $\Dist$ here, since a goal may not match any clause. In such a case there is no world in which the   goal is provable and its probability in the program is $0$.
\end{rem}


The next step is to recover distribution trees as the elements of the $\G\powset_f$-cofree coalgebra on $\At$. This goes via a terminal sequence, similarly to the case of $\F\powset_f$ in the previous section. We will not go into full detail, but only mention that the terminal sequence for $\At\times \G\powset_f(-):\sets\to \sets$ is constructed as the one for $\At\times \F\powset_f(-):\sets\to \sets$ (Construction \ref{constr:terminalseq_ground}), with $\G$ replacing $\F$.

\begin{prop}
The terminal sequence of $\At\times \G \powset_f (-)$ converges at some limit ordinal $\chi$, and $(\lambda^{\chi+1}_\chi)^{-1}:Y_\chi \to \At\times \G \powset_f Y_\chi$ is the final $\At\times \G \powset_f$ coalgebra.
\end{prop}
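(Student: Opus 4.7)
The plan is to mirror the strategy of Proposition \ref{cor:terminalseq_ground_ok}, invoking Worrell's Corollary 3.3 from \cite{worrell1999terminal}: since $\sets$ is locally presentable, it suffices to show that $\At\times \G\powset_f(-)$ is accessible and preserves monomorphisms. Convergence of the terminal sequence to a terminal coalgebra of this functor is equivalent to convergence to the cofree $\G\powset_f$-coalgebra on $\At$, so this is the right target.

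First, I would establish accessibility. The finite powerset functor $\powset_f$ is well known to be $\omega$-accessible. For $\SubDist$, every sub-probability distribution with finite support is determined by finitely many elements together with their assigned values in $[0,1]$, exactly as in the proof for $\F$; by the same argument as in \cite[Prop. 6.1.2]{SilvaThesis}, $\SubDist$ is $\omega$-accessible. Since accessibility is defined by preservation of filtered colimits and is therefore closed under functor composition, $\G = \SubDist\powset_f$ is accessible, and so is the product $\At \times \G\powset_f(-)$ (because $\At\times -$ preserves all colimits, or more modestly all filtered ones).

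Next I would check monic preservation. Both $\SubDist$ and $\powset_f$ preserve monomorphisms in $\sets$. For $\SubDist$, given a monic $i\colon C\to D$ and two sub-distributions $\phi,\phi'\in \SubDist(C)$ with $\SubDist(i)(\phi)=\SubDist(i)(\phi')$, restriction to the image $i[C]$ and composition with the set-theoretic inverse $i^{-1}\colon i[C]\to C$ gives $\phi(i^{-1}(d))=\phi'(i^{-1}(d))$ for every $d\in i[C]$, forcing $\phi=\phi'$ — the argument is identical to the one used for $\F$ in the proof of Proposition \ref{cor:terminalseq_ground_ok}. Since preservation of monos is closed under composition and under product with a fixed object, $\At\times \G\powset_f(-)$ preserves monos.

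Applying Worrell's corollary then yields a limit ordinal $\chi$ at which the terminal sequence $\{Y_\alpha,\lambda^\alpha_\beta\}$ converges, meaning $\lambda^{\chi+1}_{\chi}\colon Y_{\chi+1}\to Y_{\chi}$ is an isomorphism; equivalently, its inverse $(\lambda^{\chi+1}_\chi)^{-1}\colon Y_\chi\to \At\times\G\powset_f(Y_\chi)$ is a final $\At\times\G\powset_f$-coalgebra structure on $Y_\chi$. I do not anticipate a serious obstacle: the whole argument is a direct transposition of the one already given for $\F\powset_f$, the only substantive check being accessibility and monic-preservation of $\SubDist$, which go through by the same elementary finite-support reasoning as for $\F$.
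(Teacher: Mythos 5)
Your proposal is correct and follows essentially the same route as the paper: both invoke Worrell's Corollary 3.3, reducing the claim to accessibility and mono-preservation of $\SubDist\powset_f$ (the paper cites \cite{BartelsSokolova} for accessibility of $\SubDist$, while you spell out the finite-support and image-restriction arguments, which are exactly the "simple exercises" the paper alludes to). Nothing further is needed.
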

\begin{proof}
As for Proposition~\ref{cor:terminalseq_ground_ok}, by \cite[Cor. 3.3]{worrell1999terminal} it suffices to show that $\G$ is accessible and preserves monos. Both are simple exercises; in particular, see \cite{BartelsSokolova} for accessibility of $\SubDist$.
\end{proof}

The association of distribution trees with elements of $Y_\chi$ is suggested by the type $\At \times \G \powset_f$. Indeed, $\At \times \SubDist$ is the layer of atom-nodes, labelled with elements of $\At$ and with outgoing edges forming a sub-probability distribution; the first $\powset_f$ is the layer of world-nodes; the second $\powset_f$ is the layer of clause-nodes. The coalgebra map $Y_\chi \to \At\times \G \powset_f Y_\chi$ associates a goal to subtrees of its distribution trees, analogously to the coalgebra structure on stochastic derivation trees in the previous section.

The last ingredient we need is a translation of stochastic derivation trees into distribution trees. We formalise this as a natural transformation $\posw: \F\To \G$. The naturality of $\posw$ can be checked with a simple calculation.
\begin{defi}\label{def:possibleworldnat}
The `\textbf{p}ossible \textbf{w}orlds' natural transformation $\posw \colon  \F \!\Rightarrow \!\G$ is defined~by $\posw_X \colon \phi \mapsto \sum_{Y\subseteq \supp(\phi)} r_Y Y $, where each $r_Y = \prod_{y\in Y}\phi(y) \cdot \prod_{y'\in \supp(\phi) \setminus Y} (1 - \phi(y'))$. In particular, when $\supp(\phi)$ is empty, $\posw_X(\phi)$ is the empty sub-distribution $\emptyDist$.
\end{defi}

Now we have all the ingredients to characterise the distribution semantics coalgebraically, as the morphism $\intd{-}_p \colon \At \to Y_{\chi}$ defined by the following diagram, which maps $A \in \At$ to its distribution tree in $p$.

\begin{equation}\label{const:dist_tree_groundcase}
\vcenter{
\xymatrix@R=25pt@C=50pt{
\At \ar@{-->}@/^25pt/[rrr]^{\intd{-}_p} \ar@{-->}[r]^{\intp{-}_p} \ar[d]^{<\id_\At,p>} & X_\gamma  \ar@{-->}[rr]^{\poswtr} \ar[d]^{\cong} && Y_\chi \ar[dd]^{\cong} \\
\At \times \F \powset_f \At \ar[r]^{\id_\At \times \F \powset_f(\intp{-}_p)} \ar[d]^{\id_\At\times \posw_{\powset_f(\At)} }  & \At\times \F \powset_f X_\gamma \ar[d]^{\id_\At \times \posw_{\powset_f (X_\gamma)} } && \\
\At \times \G \powset_f \At \ar[r]^{ \id_\At \times \G \powset_f (\intp{-}_p) }  & \At\times \G \powset_f X_\gamma \ar[rr]^{\id_\At \times \G \powset_f (\poswtr)} && \At\times \G \powset_f Y_\chi
}
}
\end{equation}
Note the use of $\posw$ to extend probabilistic logic programs and stochastic derivation trees to the same coalgebra type as distribution trees. Then the distribution semantics $\intd{-}_p$ is uniquely defined by the universal property of the final $\At \times \G \powset_f$-coalgebra. By uniqueness, it can also be computed as the composite $\poswtr \circ \intp{-}_p$, that is, first one derives the semantics $\intp{-}_p$, then applies the translation $\posw$ to each level of the resulting stochastic derivation tree, in order to turn it into a distribution tree.

\section{General Case}\label{sec:general_case}
We now generalise our coalgebraic treatment to arbitrary probabilistic logic programs and goals, possibly including variables. The section has a similar structure as the one devoted to the ground case. First, in Subsection \ref{sec:coalg_gen_LP}, we give a coalgebraic representation for general $\PLP$, and equip it with a final coalgebra semantics in terms of stochastic derivation trees (Subsection \ref{sec:coalg_general}). Next, in Subsection \ref{sec:gen_dist_semantics}, we study the coalgebraic representation of the distribution semantics. We begin by introducing our leading example --- an extension of Example~\ref{ex:ground}.

\begin{exa}\label{ex:general}
We tweak the ground program of Example~\ref{ex:ground}. Now it is not just Mary that may hear the alarm, but also her neighbours. There is a small probability that the alarm rings because someone passes too close to Mary's house. However, we can only estimate the possibility of paracusia and being awake for Mary, not the neighbours. The revised program, which by abuse of notation we also call $\progalarm$, is based on an extension of the language in Example \ref{ex:ground}: we add a new $1$-ary function symbol $\sf{Neigh}^1$ to the signature $\Sigal$, and a new $1$-ary predicate $\mathsf{PassBy(-)}$ to the alphabet. Note the appearance of a variable $x$.
\begin{center}
\begin{tabular}{rll|rll}
$0.01 :: $ & $\sf{Earthquake}$ & $\ot $ &
$0.5 :: $ & $\sf{Alarm}$ & $\ot \sf{Earthquake}$ \\
$0.2 :: $ & $\sf{Burglary}$ & $\ot$ &
$0.9 :: $ & $\sf{Alarm}$ & $\ot \sf{Burglary}$ \\
$0.6 :: $ & $\sf{Wake(Mary)}$ & $\ot$ &
$0.1 :: $ & $\sf{Alarm}$ & $\ot \sf{PassBy(x)}$ \\
$0.01:: $ & $\sf{Paracusia(Mary)}$ & $\ot $ &
$0.3 :: $ & $\sf{Hear\_alarm(x)}$ & $\ot \sf{Paracusia(x)}$ \\
$0.8 ::$ & $\sf{Wake(Neigh(x))}$ & $\ot \sf{Wake(x)}$ &
$0.8 :: $ & $\sf{Hear\_alarm(x)}$ & $\ot \sf{Alarm},~ \sf{Wake(x)}$

\end{tabular}
\end{center}
\end{exa}

As we want to maintain our approach a direct generalisation of the coalgebraic semantics~\cite{BonchiZ15} for pure logic programs, the derivation semantics $\intp{-}$ for general $\PLP$ will represent resolution by \emph{unification}. This means that, at each step of the computation, given a goal $A$, one seeks substitutions $\theta,\tau$ such that $A \theta = H \tau$ for the head $H$ of some clause in the program. As a roadmap, we anticipate the way this computation is represented in terms of stochastic derivation trees (Definition \ref{def:stoch_saturate_tree} below), with a continuation of our leading example.
\begin{exa}\label{ex:general_coalgsem}
In the context of Example~\ref{ex:general}, the tree for $\intp{\mathsf{Hear\_alarm(x)}}_{\progalarm}$ is (partially) depicted below. Compared to the ground case (Example \ref{example_base_and-or}), now substitutions applied on the goal side appear explicitly as labels. Moreover, note that for each atom $A\theta$ and clause $\clauseC$, there might be multiple substitutions $\tau$ such that $A\theta = \head{\clauseC}\tau$. This is presented in the following tree \eqref{eq:exampletreegeneral} by the $\blacklozenge$-labelled nodes, which are children of the clause-nodes (labelled with $\bullet$). We abbreviate $\mathsf{Neigh}$ as $\mathsf{N}$ and $\mathsf{Mary}$ as $\mathsf{M}$.
\begin{equation}\label{eq:exampletreegeneral}
\raisebox{-80pt}{\hbox{\includegraphics[height=6cm]{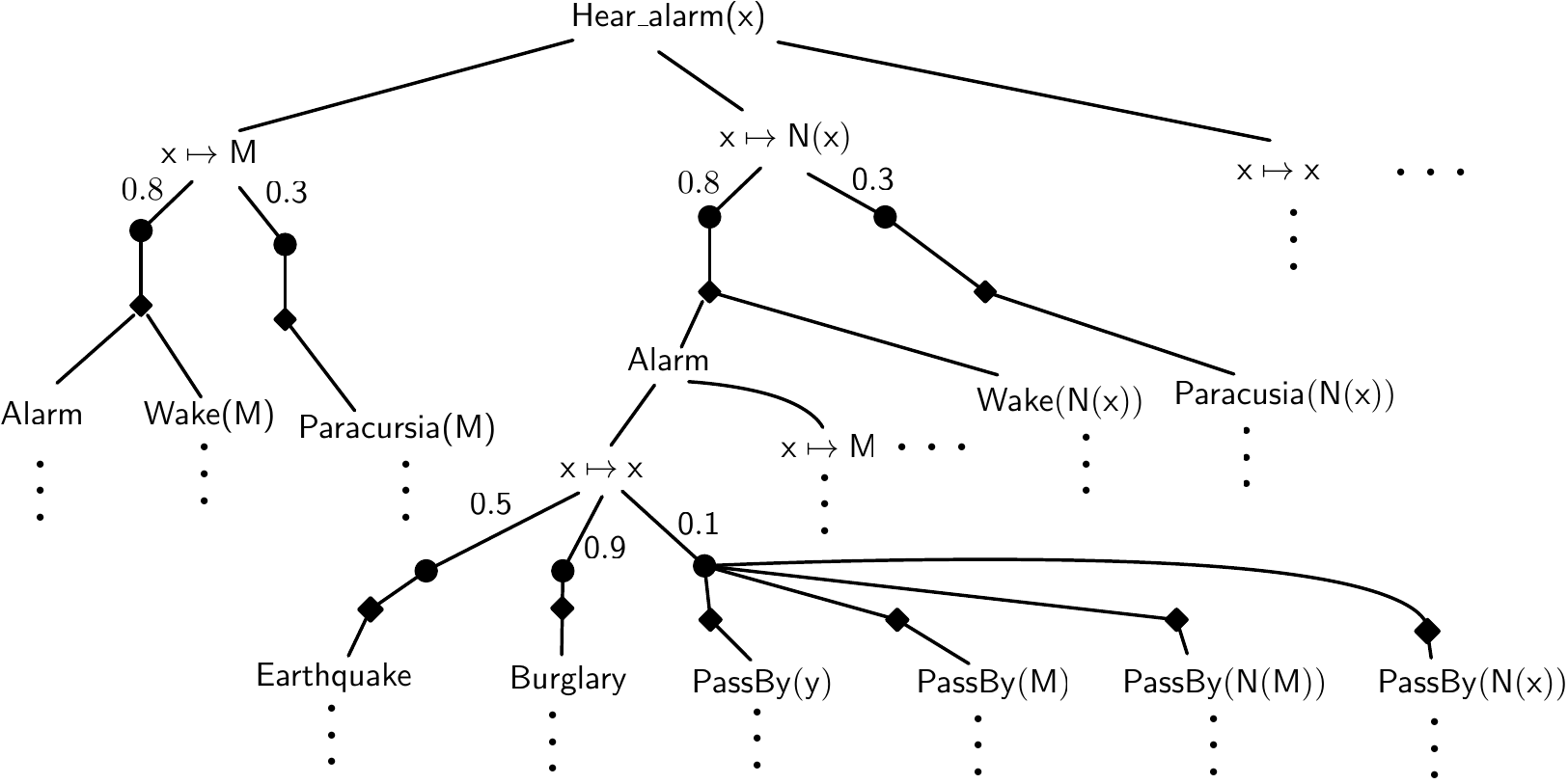}}}
\end{equation}
\end{exa}
Resolution by unification as above will be implemented in two stages. The first step is devising a map for term-matching. Assuming that the substitution instance $A \theta$ of a goal $A$ is already given, we define $p$ performing term-matching of $A \theta$ in a given program $\Prog$:

\begin{equation}\label{eq:termmatchingintuition}
\hspace{-.3cm}	\begin{array}{ccl}
 	p(A \theta) \colon & \{ B_1\tau_i,\dots, B_k\tau_i \}_{i \in I \subseteq \natN} & \mapsto  \
 	\begin{cases}
 		r \quad (r:: H \ot B_1,\dots, B_k) \in \Prog \text{ and } \\
 		\quad \quad I \text{ contains all } i \text{ s.t. }A \theta = H \tau_i \\
 	   0 \quad \text{otherwise.}
	\end{cases}
\end{array}
	\end{equation}
Intuitively, one application of such map is represented in a tree structure as Example \ref{ex:general_coalgsem} by the first two layers of the subtree rooted at $\theta$. The reason why the domain of $p(A)$ is a \emph{countable} set $\{ B_1\tau_i,\dots, B_k\tau_i \}_{i \in I \subseteq \natN}$ of instances of the same body $B_1,\dots,B_k$ is that the same clause may match a goal with countably many different substitutions $\tau_i$. For example in the bottom part of \eqref{eq:exampletreegeneral} there are countably infinite substitutions $\tau_i$ matching the head of $\mathsf{Alarm} \ot \mathsf{PassBy(x)}$ to the goal $\mathsf{Alarm}$, substituting $x$ with $\mathsf{Mary}, \mathsf{Neigh(Mary)}, \mathsf{Neigh(x)}, \dots$. This will be reflected in the coalgebraic representation of $\PLP$ (see \eqref{eq:TMcoalg} below) by the use of the countable powerset functor $\powset_c$.

In order to model arbitrary unification, the second step is considering all substitutions $\theta$ on the goal $A$ such that a term-matcher for $A \theta$ exists. There is an elegant categorical construction \cite{BonchiZ15} packing together these two steps into a single coalgebra map. We will start with this in the next session.

\subsection{Coalgebraic Representation of $\PLP$}
\label{sec:coalg_gen_LP}

Towards a categorification of general $\PLP$, the first concern is to account for the presence of variables in atoms. This is standardly done by letting the space of atoms on an alphabet $\alphabet$ be a presheaf $\At \colon \LawOp \to \sets$ rather than a set. Here the index category $\LawOp$ is the opposite \textit{Lawvere Theory} of $\Sigma$ (see Section~\ref{sec:preliminaries}). For each $n\in \ob(\LawOp)$, $\At(n)$ is defined as the set of $\alphabet$-atoms in context $n$. Given an $n$-tuple $\theta = \lr{t_1,\dots, t_n} \in \LawOp[n, m]$ of $\Sigma$-terms in context $m$, $\At(\theta) \colon \At(n)\to \At(m)$ is defined by substitution, namely $\At(\theta)(A) = A\theta$, for any $A\in \At(n)$.

As observed in \cite{komendantskaya2011coalgebraic} for pure logic programs, if we naively try to model our specification \eqref{eq:termmatchingintuition} for $p$ as a coalgebra on $\At$, we run into problems: indeed $p$ is not a natural transformation, thus not a morphism between presheaves. Intuitively, this is because the existence of a term-matching for a substitution instance $A\theta$ of $A$ does not necessarily imply the existence of a term-matching for $A$ itself. For pure logic programs, this problem can be solved in at least two ways. First, \cite{komendantskaya2011coalgebraic} relaxes naturality by changing the base category of presheaves from $\sets$ to $\poset$. We take here the second route, namely give a `saturated' coalgebraic treatment of $\PLP$, generalising the modelling of pure logic programs proposed in \cite{BonchiZ15}. This approach has the advantage of letting us work with $\sets$-based presheaves, and be still able to recover term-matching via a `desaturation' operation --- see \cite{BonchiZ15} and Appendix \ref{app:general_computation}.

\smallskip
 \noindent\textbf{The Saturation Adjunction.} To this aim, we briefly recall the saturated approach from \cite{BonchiZ15}. The central piece is the adjunction $\U \dashv \K$ on presheaf categories, as on the left below.
\begin{equation}\label{eq:saturationadju}
\xymatrix{
\sets^{\LawOp} \ar@/^1pc/[rr]^{\U} & \bot & \sets^{|\LawOp|} \ar@/^1pc/[ll]^{\K}
}
\qquad \qquad \qquad \qquad
\vcenter{
\xymatrix{
|\LawOp| \ar@{^{(}->}[r]^{\iota} \ar[d]_{\func{F}} & \LawOp \ar[ld]^{\K(\func{F})} \\
\sets  &
}
}
\end{equation}
Here $|\LawOp|$ is the discretisation of $\LawOp$, i.e. all the arrows but the identities are dropped. The left adjoint $\U$ is the forgetful functor, given by precomposition with the obvious inclusion $\iota \colon |\LawOp|\to \LawTh^\op_\Sigma$. $\U$ has a right adjoint $\K = \mathsf{Ran}\iota \colon \sets^{|\LawOp|} \to \sets^{\LawTh^\op_\Sigma}$, which sends every presheaf $\func{F} \colon |\LawTh^\op_\Sigma|\to \sets$ to its \textit{right Kan extension} along $\iota$, as in the rightmost diagram in~\eqref{eq:saturationadju}. The definition of $\K$ can be computed~\cite{maclane1971} as follows:
\begin{itemize}
	\item On objects $\func{F} \in \ob(\sets^{\LawOp})$, the presheaf $\K(\func{F}) \colon \LawOp \to \sets$ is defined by letting $\K(\func{F})(n)$ be the product $\prod_{\theta\in \LawOp[n,m]} \func{F}(m)$, where $m$ ranges over $\ob(\LawOp)$. Intuitively, every element in $\K(\func{F})(n)$ is a tuple with index set $\bigcup_{m\in \ob(\LawOp)}\LawOp[n,m]$, and its component at index $\theta \colon n\to m$ must be an element in $\func{F}(m)$. We follow the convention of \cite{BonchiZ15} and write $\dot{x}, \dot{y}, \dots$ for such tuples, and $\dot{x}(\theta)$ for the component of $\dot{x}$ at index $\theta$.

	With this convention, given an arrow $\sigma\in \LawOp[n,n']$, $\K (\func{F})(\sigma)$ is defined by pointwise substitution as the mapping of the tuple $\dot{x}$ to the tuple $\lr{\dot{x}(\theta\after \sigma)}_{\theta \colon n'\to m}$.
	\item On arrows, given a morphism $\alpha \colon \func{F}\to \func{G}$ in $\sets^{|\LawOp|}$, $\K(\alpha)$ is a natural transformation $\K(\func{F}) \to \K(\func{G})$ defined pointwisely
as $\K(\alpha)(n) \colon \dot{x}\mapsto \lr{\alpha_m(\dot{x}(\theta))}_{\theta \colon n\to m}$.
\end{itemize}
It is also useful to record the unit $\eta \colon 1\to \K\U$ of the adjunction $\U\dashv \K$. Given a presheaf $\func{F} \colon \LawTh^\op_\Sigma\to \sets$, $\eta_\func{F} \colon \func{F}\to \K\U\func{F}$ is a natural transformation defined by $ \eta_\func{F}(n) \colon x \mapsto \lr{\func{F}(\theta)(x)}_{\theta \colon n\to m}$.

 \noindent\textbf{Saturation in $\PLP$.} We now construct the coalgebra structure on the presheaf $\At$ modelling $\PLP$. First, we are now able to represent $p$ in \eqref{eq:termmatchingintuition} as a coalgebra map. The aforementioned naturality issue is solved by defining it as a morphism in $\sets^{|\LawOp|}$ rather than in $\sets^{\LawOp}$, thus making naturality trivial. The coalgebra $p$ will have the following type
\begin{equation}\label{eq:TMcoalg}
	p \colon \U\At\to \liftF \lift{\powset_c} \lift{\powset_f} \U\At
\end{equation}
 where $\lift{(\cdot)}$ is the obvious extension of $\sets$-endofunctors to $\sets^{|\LawOp|}$-endofunctors, defined by functor precomposition. With respect to the ground case, note the insertion of $\lift{\powset_c}$, the lifting of the \emph{countable} powerset functor, in order to account for the countably many instances of a clause that may match the given goal (\emph{cf.} the discussion below \eqref{eq:termmatchingintuition}). 
 
 \begin{exa}
Our program $\progalarm$ (Example \ref{ex:general}) is based on $\Atalarm \colon \LawTh^\op_{\Sigal} \to \sets$. Some of its values are $\Atalarm(0) = \{ \sf{Mary}, \sf{Neigh(Mary)}, \sf{Neigh(Neigh(Mary))}, \dots \}$ and $\Atalarm(1) = \{ \mathsf{x}, \sf{Mary}, \sf{Neigh(x)}, \sf{Neigh(Mary)}, \dots \}$. Part of the coalgebra $\palarm$ modelling the program $\progalarm$ is as follows (\emph{cf.} the tree \eqref{eq:exampletreegeneral}). For space reason, we abbreviate $\sf{Mary}$ as $\sf{M}$ in the second equation.
\begin{align*}
(\palarm)_0 (\sf{Hear\_alarm(Mary)}) & = 0.8 \{ \{\sf{Alarm}, \sf{Wake(Mary)}\} \} + 0.3\{ \{\mathsf{Parasusia(Mary)} \} \} \\
(\palarm)_1 (\mathsf{Alarm}) & = 0.5 \{ \{\mathsf{Earthquake}\} \} + 0.9 \{ \{\mathsf{Burglary}\} \} \\
& ~~~~ + 0.1 \{ \{\mathsf{PassBy(M)}\}, \{ \mathsf{PassBy(Neigh(M))} \}, \{ \mathsf{PassBy(Neigh(x))} \}, \dots  \}
\end{align*}
\end{exa}

 The universal property of the adjunction \eqref{eq:saturationadju} gives a canonical `lifting' of $p$ to a $\K \liftF \lift{\powset_c} \lift{\powset_f} \U$-coalgebra $p^\sharp$ on $\At$, performing unification rather than just term-matching:
\begin{equation}\label{eq:Unificationcoalg}
p^\sharp \coloneqq \At \xto{\eta_\At} \K\U \At \xto{\K p} \K \liftF \lift{\powset_c} \lift{\powset_f} \U \At
\end{equation}
where $\eta$ is the unit of the adjunction, as defined above. Spelling it out, $p^\sharp$ is the mapping
\[p^\sharp_n \ \colon \ A\in \At(n) \mapsto \lr{p_m (A\theta)}_{\theta \colon n\to m}
\]
Intuitively, $p^\sharp_n$ retrieves all the unifiers $\lr{\theta,
\tau}$ of $A$ and head $H$ in $\Prog$: first, we have $A\theta\in \At(m)$ as a component of the saturation of $A$ by $\eta_{\At}$; then we term-match $H$ with $A\theta$ by $\K p_m$.

\begin{rem}\label{rmk:context}
Note that the parameter $n \in \ob(\LawOp)$ in the natural transformation $p^{\sharp}$ fixes the pool $\{x_1,\dots,x_n\}$ of variables appearing in the atoms (and relative substitutions) that are considered in the computation. Analogously to the case of pure logic programs~\cite{komendantskaya2011coalgebraic, BonchiZ15}, it is intended that such $n$ can always be chosen `big enough' so that all the relevant substitution instances of the current goal and clauses in the program are covered --- note the variables occurring therein always form a \emph{finite} set, included in $\{x_1, \dots, x_m\}$ for some $m\in \natN$.
\end{rem}


\subsection{Derivation Semantics}\label{sec:coalg_general}

Once we have identified our coalgebra type, the construction leading to the derivation semantics $\intp{-}_{p^\sharp}$ for general $\PLP$ is completely analogous to the ground case. One can define the cofree coalgebra for $\K \liftF \lift{\powset_c} \lift{\powset_f} \U (-)$ by terminal sequence, similarly to Construction \ref{constr:terminalseq_ground}. For simplicity, henceforth we denote the functor $\K \liftF \lift{\powset_c} \lift{\powset_f} \U(-)$ by $\genPLP$. 
\begin{construction}\label{constr:terminalseq_general}
The terminal sequence for $\At\times \genPLP(-): \sets^{\LawOp} \to \sets^{\LawOp}$ consists of a sequence of objects $X_\alpha$ and morphisms $\delta^\beta_\alpha \colon X_\beta\to X_\alpha$, for $\alpha<\beta \in \Ord$, defined analogously to Construction \ref{constr:terminalseq_ground}, with $p^\sharp$ and $\genPLP$ replacing $p$ and $\F \powset_f$.
\end{construction}
This terminal sequence converges by the following proposition.
\begin{prop}\label{lemma:gen.acc}
$\S$ is accessible, and preserves monomorphisms.
\end{prop}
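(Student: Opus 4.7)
The plan is to reduce the statement for the composite $\genPLP = \K \liftF \lift{\powset_c} \lift{\powset_f} \U$ to properties of its constituent factors, using that both accessibility and preservation of monomorphisms are closed under composition. Once this is established, $\genPLP$ satisfies the hypotheses of \cite[Cor.~3.3]{worrell1999terminal} (already invoked in Proposition \ref{cor:terminalseq_ground_ok}), so convergence of the terminal sequence in Construction \ref{constr:terminalseq_general} follows by the same argument.

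First I would dispatch the easy factors. Since $|\LawOp|$ is discrete, $\sets^{|\LawOp|}$ is (equivalent to) a product of copies of $\sets$ indexed by $\ob(\LawOp)$; in particular limits, colimits, and monos in $\sets^{|\LawOp|}$ are all computed pointwise. Hence the lifted functors $\liftF$, $\lift{\powset_c}$, $\lift{\powset_f}$ inherit accessibility and mono-preservation directly from their $\sets$-level counterparts: for $\F$ and $\powset_f$ this is already established in Proposition \ref{cor:terminalseq_ground_ok}, and for $\powset_c$ the standard argument gives $\omega_1$-accessibility together with preservation of injective maps (if $f$ is injective, $\powset_c(f)(A) = f[A]$ can be recovered as $A = f^{-1}(f[A])$). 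The functor $\U$ is a left adjoint to $\K$, so it preserves all colimits and is in particular $\omega$-accessible; being defined by precomposition with $\iota$, it picks out pointwise components of a presheaf on $\LawOp$ and therefore trivially preserves pointwise injections, i.e.\ monos.

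The main step, and the one I expect to be the main obstacle, is $\K = \Ran_\iota$. Mono-preservation follows directly from its pointwise formula $\K(\alpha)_n(\dot{x}) = \langle \alpha_m(\dot{x}(\theta)) \rangle_{\theta \colon n \to m}$: pointwise injectivity of $\alpha$ forces $\K(\alpha)_n$ to be injective, since tuples differing at some index $\theta$ are sent to tuples differing at that same index. For accessibility, the formula $\K(\func{F})(n) = \prod_{\theta \in \bigcup_m \LawOp[n,m]} \func{F}(\cod(\theta))$ exhibits $\K(\func{F})(n)$ as a product of evaluations indexed by the substitutions out of $n$. Under the paper's standing assumption of a countable signature and countable set of variables, each such indexing set is at most countable, so this is a \emph{countable} product; each evaluation functor $\func{F} \mapsto \func{F}(m)$ preserves all colimits, and countable products in $\sets$ preserve $\omega_1$-filtered colimits, so $\K$ is $\omega_1$-accessible. (For an uncountable signature one would simply pick a regular cardinal $\kappa$ exceeding the cardinality of the relevant hom-sets.)

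Assembling the pieces, each factor of $\genPLP$ is accessible (for some cardinal bounded by $\omega_1$ in the countable-signature case) and preserves monomorphisms. Composition preserves mono-preservation trivially, and composition of accessible functors is again accessible (at a possibly larger regular cardinal); hence $\genPLP$ itself is accessible and preserves monos, concluding the proof.
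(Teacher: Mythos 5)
Your proposal is correct and follows essentially the same route as the paper's proof: it reduces the claim to the factors of $\genPLP$ via closure of accessibility and mono-preservation under composition, and handles $\liftF$, $\lift{\powset_c}$, $\lift{\powset_f}$ by the pointwise nature of (co)limits and monos in $\sets^{|\LawOp|}$, exactly as in Proposition~\ref{cor:terminalseq_ground_ok}. The only difference is that for $\K$ and $\U$ the paper simply cites \cite{BonchiZ15}, whereas you supply correct direct arguments (the product-of-evaluations formula for $\Ran_\iota$ with a cardinality bound on the hom-sets, and left-adjointness plus pointwise injectivity for $\U$).
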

\begin{proof}
Since both properties are preserved by composition, it suffices to show that they hold for all the component functors. For $\liftF$, $\lift{\powset_c}$ and $\lift{\powset_f}$, they follow from accessibility and mono-preservation of $\F$, $\powset_c$ and $\powset_f$ (see Proposition~\ref{cor:terminalseq_ground_ok}), as (co)limits in presheaf categories are computed pointwise. For $\K$ and $\U$, these properties are proven in \cite{BonchiZ15}.
\end{proof}
Therefore the terminal sequence for $\At\times \genPLP(-)$ converges at some limit ordinal, say $\gamma$, yielding the final $\At\times \genPLP(-)$-coalgebra $X_{\gamma} \xrightarrow{\iso} \At\times \genPLP(X_\gamma)$. The derivation semantics is then defined $\intp{-}_{p^\sharp} \colon \At\to X_\gamma$ by universal property as in the diagram below.
\begin{equation}\label{gen_final_sem}
\vcenter{
\xymatrix{
\At \ar@{.>}[rr]^-{\intp{-}_{p^\sharp}} \ar[d]_{<\id_\At, p^\sharp>} && X_\gamma \ar[d]^{\iso} \\
\At\times \genPLP(\At) \ar[rr]_-{\id_\At \times \intp{-}_{p^\sharp}} && \At\times \genPLP(X_\gamma)
}
}
\end{equation}
A careful inspection of the terminal sequence constructing $X_{\gamma}$ allows to infer a representation of its elements as trees, among which we have those representing computations by unification of goals in a $\PLP$ program. We call these \emph{stochastic saturated derivation trees}, as they extend the derivation trees of Definition \ref{def:stochastictrees_ground} and are the probabilistic variant of saturated and-or trees in \cite{BonchiZ15}. Using \eqref{gen_final_sem} one can easily verify that $\intp{A}_{p^\sharp}$ is indeed the stochastic saturated derivation tree for a given goal $A$. Example \ref{ex:general_coalgsem} provides a pictorial representation of one such tree.

\begin{defi}[Stochastic saturated derivation trees]\label{def:stoch_saturate_tree}
Given a probabilistic logic program $\Prog$, a natural number $n$ and an atom $A\in \At(n)$. The \emph{stochastic saturated derivation tree} for $A$ in $\Prog$ is the possibly infinite tree $\mathcal{T}$ satisfying the following properties:
\begin{enumerate}
	\item There are four kinds of nodes: atom-node (labelled with an atom), substitution-node (labelled with a substitution), clause-node (labelled with $\clauseCnode$), instance-node (labelled with $\blacklozenge$), appearing alternatively in depth in this order. The root is an atom-node with label~$A$.
	\item Each edge between a substitution node and its clause-node children is labelled with some $r\in [0,1]$.
	\item Suppose an atom-node $s$ is labelled with $A'\in \At(n')$. For every substitution $\theta \colon n'\to m'$, $s$ has exactly one (substitution-node) child $t$ labelled with $\theta$. For every clause $r :: H\ot B_1,\dots, B_k$ in $\Prog$ such that $H$ matches $A'\theta$ (via some substitution), $t$ has exactly one (clause-)child $u$, and edge $t\to u$ is labelled with $r$. Then for every substitution $\tau$ such that $A'\theta = H\tau$ and $B_1\tau, \dots, B_k\tau \in \At(m')$, $u$ has exactly one (instance-)child $v$. Also $v$ has exactly $|\{ B_1, \dots, B_k\}|\tau$-many (atom-)children, each labelled with one element in $\{ B_1, \dots, B_k \}\tau$.
\end{enumerate}
\end{defi}

\subsection{Distribution Semantics} \label{sec:gen_dist_semantics}
In this section we conclude by giving a coalgebraic perspective on the distribution semantics $\intd{-}$ for general $\PLP$. 
Mimicking the ground case (Section~\ref{sec:distrsem_groundcase}), this will be presented as an extension of the derivation semantics, via a `possible worlds' natural transformation. Also in the general case, we want to guarantee that a single probability value is computable for a given goal $A$ from the corresponding tree $\intd{A}$ in the final coalgebra --- whenever this probability is also computable in the `traditional' way (see \eqref{eq:distrsemantics}) of giving distribution semantics to $\PLP$. In this respect, the presence of variables and substitutions poses additional challenges, for which we refer to Appendices~\ref{sec:algo_ground} and~\ref{app:general_computation}. In a nutshell, the issue is that the distribution semantics counts the existence of a clause in the program rather than the number of times a clause is used in the computation. This means that every clause is counted at most once, independently from how many times that clause is used again in the computation. Note that our tree representation, as in the ground case, presents the resolution (thus computation aspect) of the program. So in our tree representation we need to give enough information to determine which clause is used at each step of the computation, so that a second use can be easily detected. 
However, neither our saturated derivation trees, nor a `naive' extension of them to distribution trees, carry such information: what appears in there is only the instantiated heads and bodies, but in general one cannot retrieve $A$ from a substitution $\theta$ and the instantiation $A\theta$. This is best illustrated via a simple example.
\begin{exa}
Consider the following program, based on the signature $\Sigma = \{ a^0 \}$ and two $1$-ary predicates $P$, $Q$. It consists of two clauses:
\begin{equation*}
\begin{tabular}{rl|rl}
$0.5 ::$ & $P(x_1)\ot Q(x_1)$  \quad \quad   & \quad \quad  $0.5 ::$ & $P(x_1)\ot Q(x_2)$
\end{tabular}
\end{equation*}
The goal $P(a)$ matches the head of both clauses. However, given the sole information of the next goal being $Q(a)$, it is impossible to say whether the first clause has been used, instantiated with $x_1 \mapsto a$, or the second clause has been used, instantiated with $x_1 \mapsto a, x_2 \mapsto a$.
\end{exa}

This observation motivates, as intermediate step towards the distribution semantics, the addition of labels to clause-nodes in derivation trees, in order to make explicit which clause is being applied. From the coalgebraic viewpoint, this just amounts to an extension of the type of the term-matching coalgebra:
\begin{equation}\label{eq:gen-PLP-dist-term-match}
\pclauses \colon \U\At \to \liftF (\lift{\powset_c} \lift{\powset_f} \U \At \times (\U \At \times \lift{\powset_f} \U \At) )
\end{equation}
Note the insertion of $(-) \times (\U \At \times \lift{\powset_f} \U \At)$, which allows us to indicate at each step the head ($\U\At$) and the body ($ \lift{\powset_f} \U \At$) of the clause being used, its probability label being already given by $\liftF$. More formally, for any $n$ and atom $A\in \At(n)$, we define\footnote{As noted in Remark~\ref{rmk:context}, instantiating $\pclauses$ to some $n \in \ob(\LawOp)$ fixes a variable context $\{x_1,\dots,x_n\}$ both for the goal and the clause labels. In practice, because the set of clauses is always finite, it suffices to chose $n$ `big enough' so that the variables appearing in the clauses are included in $\{x_1,\dots,x_n\}$.}
\begin{align*}
\pclauses_n(A) \colon  \lr{ \{B_1\tau_i, \dots, B_k\tau_i\}_{i\in \mathcal{I}\subseteq \natN} , \lr{H, \{B_1,\dots, B_k\}} } 
\mapsto \begin{cases}
r \quad (r \! :: \! H \! \ot\! B_1,\dots, B_k) \in \Prog \\
\quad\quad \text{and } H\tau_i = A\\
0 \quad \text{otherwise}
\end{cases}
\end{align*}
As in the case of $p$ in \eqref{eq:TMcoalg}, we can move from term-matching to unification by using the universal property of the adjunction $\U\dashv \K$, yielding $\pclauses^{\, \sharp} \colon \At \to  \K \liftF (\lift{\powset_c} \lift{\powset_f} \U \At \times (\U \At \times \lift{\powset_f} \U \At) )$. For simplicity henceforth we denote the functor $\K \liftF (\lift{\powset_c} \lift{\powset_f} \U (-) \times (\U \At \times \lift{\powset_f} \U \At) )$ by $\genPLPclause$.

We are now able to conclude our characterisation of the distribution semantics. The `possible worlds' transformation $\posw \colon \F \To \G$ (Definition \ref{def:possibleworldnat}) yields a natural transformation $\lift{\posw} \colon \liftF \to \liftG$, defined pointwise by $\posw$. We can use $\lift{\posw}$ to translate $\genPLPclause$ into the functor $\K \liftG (\lift{\powset_c} \lift{\powset_f} \U (-) \times (\clauseCfunc))$, abbreviated as $\genDist$, which is going to give the type of saturated distribution trees for general $\PLP$ programs.

As a simple extension of the developments in Section~\ref{sec:coalg_general}, we can construct the cofree $\genPLPclause$-coalgebra $\Phi \xrightarrow{\iso} \At \times \genPLPclause(\Phi)$ via a terminal sequence. Similarly, one can obtain the cofree $\genDist$-coalgebra $\Psi \xrightarrow{\iso} \At \times \genDist(\Psi)$. By the universal property of $\Psi$, all these ingredients get together in the definition of the distribution semantics $\intd{-}_{\pclauses^{\, \sharp}}$ for arbitrary $\PLP$ programs $\pclauses^{\, \sharp}$
$$
\xymatrix@R=25pt@C=50pt{
\At \ar@{-->}@/^25pt/[rrr]^{\intd{-}_{\pclauses^{\, \sharp}}} \ar@{-->}[r]^{!_{\Phi}} \ar[d]^{<\id_\At,\pclauses^{\, \sharp}>} & \Phi  \ar@{-->}[rr]^{\poswtr_{\Psi}} \ar[d]^{\cong} && \Psi \ar[dd]^{\cong} \\
\At \times \genPLPclause \At \ar[r]^{\id_\At \times \genPLPclause(!_{\Phi})} \ar[d]^{\id_\At\times \K\lift{\posw}}  & \At\times \genPLPclause \Phi \ar[d]^{\id_\At \times \K\lift{\posw}} && \\
\At \times \genDist \At \ar[r]^{ \id_\At \times \genDist (!_{\Phi}) }  & \At\times \genDist \Phi \ar[rr]^{\id_\At \times \genDist (\poswtr_{\Psi})} && \At\times \genDist \Psi
}
$$
where $!_{\Phi}$ and $\poswtr_{\Psi}$ are given by the evident universal properties, and show the role of the cofree $\genPLPclause$-coalgebra $\Phi$ as an intermediate step. The layered construction of final coalgebras $\Psi$ and $\Phi$, together with the above characterisation of $\intd{-}_{\pclauses^{\, \sharp}}$, allow to conclude that the distribution semantics for the program $\pclauses^{\, \sharp}$ maps a goal $A$ to its \emph{saturated distribution tree} $\intd{A}_{\pclauses^{\, \sharp}}$, as formally defined below.

\begin{defi}[Saturated distribution tree]\label{def:saturated-dist-tree}
The \emph{saturated distribution tree} for $A\in \At(n)$ in $\Prog$ is the possibly infinite $\mathcal{T}$ satisfying the following properties based on Definition \ref{def:stoch_saturate_tree}:
\begin{enumerate}
	\item There are five kinds of nodes: in addition to the atom-, substitution-, clause- and instance-nodes, there are world-nodes. The world-nodes are children of the substitution-nodes, and parents of the clause nodes. The root and the order of the rest of the nodes are the same as in Definition \ref{def:stoch_saturate_tree}, condition \textbf{1}. The clause-nodes are now labelled with clauses in $\Prog$.
	\item Suppose $s$ is an atom-node labelled with $A'\in \At(n')$, and $t$ is a substitution-child of $s$ labelled with $\theta \colon n'\to m$. Let $C$ be the set of all clauses $\clauseC$ such that $\head{\clauseC}$ matches $A'\theta$. Then $t$ has $2^{|C|}$ world-children, each representing a subset $X$ of $C$. If a world-child $u$ of $t$ represents subset $X$, then the edge $t \to u$ has probability label $\prod_{\clauseC\in X} \lab{\clauseC} \cdot \prod_{\clauseC'\in C\setminus X} (1- \lab{\clauseC'})$. Also $u$ has $|X|$ clause-children, one for each clause $\clauseC\in X$, labelled with the corresponding clause. The rest for clause-nodes and instance-nodes are the same as in Definition \ref{def:stoch_saturate_tree}, condition \textbf{3}.
\end{enumerate}

\begin{rem}
Note that, in principle, saturated distribution trees could be defined coalgebraically without the intermediate step of adding clause labels. This is to be expected: coalgebra typically captures the one-step, `local' behaviour of a system. On the other hand, as explained, the need for clause labels is dictated by a computational aspect involving the depth of distribution trees, that is, a `non-local' dimension of the system.
\end{rem}

\end{defi}
We conclude with the pictorial representation of the saturated distribution tree of a goal in our leading example.
\begin{exa}
In the context of Example \ref{ex:general}, the tree $\intd{\mathsf{Hear\_alarm(x)}}$ capturing the distribution semantics of $\mathsf{Hear\_alarm(x)}$ is (partially) depicted as follows. Note the presence of clauses labelling the clause-nodes. 
\begin{equation*}
\raisebox{-80pt}{\hbox{\includegraphics[height=6cm]{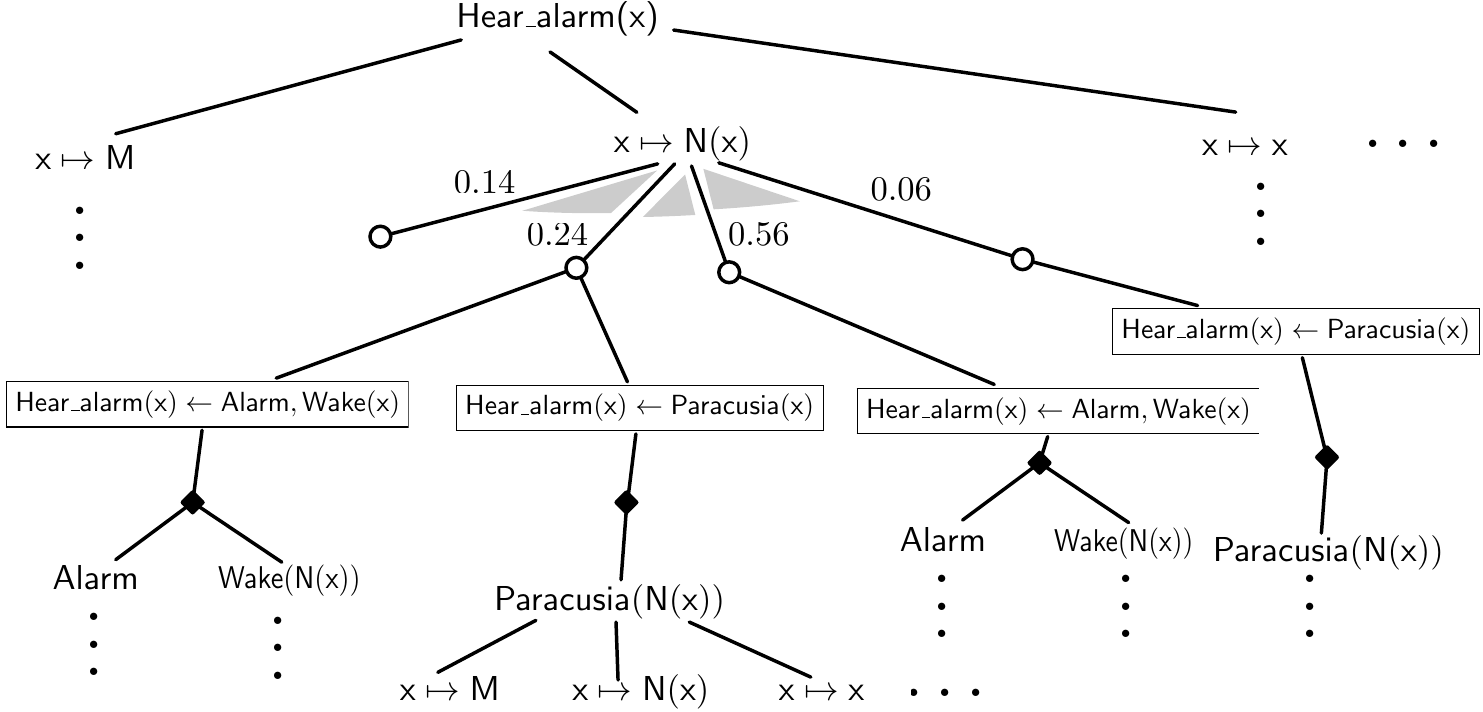}}}
\end{equation*}
\end{exa}


\section{Weighted Logic Programming}\label{sec:WLP}
Weighted logic programming ($\WLP$) is an abstract framework for describing dynamic programming algorithms in a number of fields, such as natural language processing \cite{eisner2007} and computational biology \cite{durbin1998}. $\WLP$ generalises logic programming by assigning a weight/score to each clause. Given a goal $A$, the central task then becomes computing the weight of $A$, based on the weights of the clauses appearing in the proof tree of $A$.

As $\WLP$ and $\PLP$ are seemingly close formalisms, it is natural to ask to what extent our coalgebraic perspective applies to $\WLP$. In this section we develop such perspective. By analogy with the development of coalgebraic $\PLP$, we first discuss the ground case (Subsection \ref{sec:WLP-groundcase}), and then move to the general case (Subsection \ref{sec:WLP-generalcase}). 

\medskip
We begin by recalling the basics of $\WLP$, referring to \cite{eisner2007} for further details. We fix a semiring $\semiK = \lr{K, \semiPlus, \semiTimes, \plusUnit, \prodUnit}$, where $\plusUnit$ is the $\semiPlus$-unit, and $\prodUnit$ is the $\semiTimes$-unit. A weighted logic program $\progW$ based on a logic program $\progL$ assigns to each clause $\clauseC$ in $\progL$ a weight label $c\in K$. In other words, a weighted logic program $\progW$ consists of finitely many clauses of the form:
\[
c :: H \ot B_1, \dots, B_k
\]
where $H, B_1, \dots, B_k$ are atoms and $c\in K$.\footnote{ It is typical in the literature (e.g. \cite{shay2010}) to only assign weights to facts, i.e. clauses with empty bodies, and leave the other clauses unlabelled. However, we can assume without loss of generality that all clauses are assigned a weight, as the weight semantics \eqref{eq:cal-weight} implicitly assigns weight $\prodUnit$ to unlabelled clause.}
\begin{exa}\label{ex:weight-short-path}
We use as running example the following weighted logic program $\progShortPath$, from \cite{shay2010}. $\progShortPath$ is based on the semiring $\semiK^{sp} = \lr{\nonNegReal \cup \{ +\infty\}, \min, +, +\infty, 0}$. Its language consists of constants $\sf{ a, b, c, d }$, a 1-ary predicate $\sf{initial(-)}$, and 2-ary predicates $\sf{edge}(-, -)$ and $\sf{reachable(-, -)}$.
\medskip

\begin{minipage}{0.5\textwidth}
\begin{tabular}{rllrll}
$0 ::$ & $\mathsf{initial(a)}$ & $\ot$ & \multicolumn{1}{|r}{$4::$} & $\mathsf{edge(a,c)}$ & $\ot$ \\
$20 ::$ & $\sf{edge(a,d)}$ & $\ot$ & \multicolumn{1}{|r}{$8::$} & $\sf{edge(b,b)}$ & $\ot$ \\
$9 ::$ & $\sf{edge(c,a)}$ & $\ot$ & \multicolumn{1}{|r}{$15 ::$} & $\mathsf{edge(c,d)}$ & $\ot$ \\
$6 ::$ & $\sf{edge(d, b)}$ & $\ot$ & \multicolumn{1}{|r}{$2 ::$} & $\sf{edge(d,d)}$ & $\ot$ \\
$6 ::$ & $\sf{edge(d,b)}$ & $\ot$ & \multicolumn{1}{|r}{$0 ::$} & $\sf{reachable(x)}$ & $\ot \sf{initial(x)}$ \\
$0 ::$ & $\mathsf{reachable(x)}$ & $\ot$ & \multicolumn{3}{l}{$\sf{reachable(y)}, \sf{edge(y, x)}$}
\end{tabular}
\end{minipage}
\noindent\begin{minipage}{0.5\textwidth}
\begin{equation*}
\xymatrix{
a \ar@/_/[d]|{\whitespacearound{4}} \ar@/^/[dr]|{\whitespacearound{20}} & b \ar@(l,u)[]^{8}  \\
c \ar@/_/[r]_{15} \ar@/_/[u]|{\whitespacearound{9}} & d \ar@/_/[l]|{\whitespacearound{16}} \ar@/_/[u]_{6} \ar@(rd, ru)[]_{2}
}
\end{equation*}
\end{minipage}
\medskip

\noindent $\progShortPath$ describes the single-source directed graph on the right. It expresses that a state $x$ is reachable if either $x$ itself is the initial state, or there is an edge from a reachable state to $x$. For simplicity, we will sometimes refer to the predicates appearing in $\progShortPath$ by their abbreviations, such as $\sf{init(-)}$, $\sf{reach(-, -)}$.
\end{exa}

As for the semantics, the central task for $\WLP$ is computing the \emph{weight} $\weight^{\scriptscriptstyle \progW}(A)$ (or simply $\weight(A)$) of an atom $A$ in $\progW$. This is an element of $\semiK$, defined as follows:

\begin{equation}\label{eq:cal-weight}
\weight^{\scriptscriptstyle \progW}(A) = \semiSum_{\substack{c :: A \ot B_1, \dots, B_k ~\text{is substitution} \\ \text{instance of some clause in}~\progW}} \left( c \semiTimes \semiProd_{B_i} \weight^{\scriptscriptstyle \progW}(B_i) \right)
\end{equation}
We will refer to \eqref{eq:cal-weight} as the \emph{weight semantics} for $\WLP$. Note that, when the body of a clause is empty, $\semiProd_{B_i} \weight(B_i)$ is an empty product and yields the unit $\prodUnit$. 
Also, the weight function $\weight$ is not necessarily well-defined for every goal and program. For example, consider the program consisting of a single clause $2 :: A \ot A$ with the semiring in Example \ref{ex:weight-short-path}. Then \eqref{eq:cal-weight} does not assign a value to $A$, because the recursive calculation does not terminate. 
\begin{exa}
In the program $\progShortPath$ from Example \ref{ex:weight-short-path}, the goal $\sf{reachable(d)}$ has weight $\weight(\sf{reachable(d)}) = 19$, and it is exactly the weight of the shortest path from $\sf{a}$ to $\sf{d}$.
\end{exa}

\begin{rem}\label{rem:PLP-is-not-WLP}
One might wonder whether $\PLP$ in Section \ref{sec:groundcase} and \ref{sec:general_case} is a special case of $\WLP$. The answer is negative, for two reasons. First, the underlying algebra structure for $\PLP$ is $\lr{[0,1], \por, \times, 0, 1}$, which is not a semiring (distributivity fails). Second, the distribution semantics in \eqref{eq:distrsemantics} is not a special case of the weight semantics in \eqref{eq:cal-weight}.
\end{rem}

\subsection{Coalgebraic Semantics of Ground $\WLP$}\label{sec:WLP-groundcase}
We will present the coalgebraic semantics for ground $\WLP$ analogous to that for $\PLP$ in Section \ref{sec:groundcase}: we first give a coalgebraic presentation of $\WLP$; then we induce a derivation semantics for it. The computation of the weight semantics from the derivation semantics is in Appendix \ref{app:compute-weight}.

A ground $\WLP$ program $\progW$ can be represented as a coalgebra of the type $\multiFunc \fPowset \colon \sets\to \sets$, where $\multiFunc$ is the multiset functor corresponding to the underlying semiring $\semiK = \lr{K, \semiPlus, \semiTimes, \plusUnit, \prodUnit}$. On objects, $\multiFunc \colon \sets\to \sets$ maps a set $A$ to $\{ \phi \colon A\to \semiK \mid \supp(\phi) ~\text{is finite} \}$, where $\supp(\phi) = \{ a\in A \mid \phi(a) \neq \plusUnit \}$. For each $a\in A$, we call $\phi(a)$ the weight of $a$ (under $\phi$). We adopt the standard notation $\sum_{i=1}^{k} c_i a_i$ for the function $\phi \in \multiFunc(A)$ with support $\supp(\phi) = \{ a_1, \dots, a_k \}$ and weights $\phi(a_i) = c_i$. We also use $\varnothing$ for the multiset with empty support. On morphisms, $\multiFunc(h \colon A\to B)$ maps $\sum_{i=1}^{k} c_i a_i$ to $\sum_{i=1}^{k} c_i h(a_i)$. 

Fix a ground $\WLP$ program $\progW$ based on a semiring $\semiK$ and a set of atoms $\At$. The definition of $\progW$ can be encoded as an $\multiFunc \fPowset$-coalgebra $w \colon \At\to \multiFunc \fPowset (\At)$ as follows: for each $A\in \At$,
\begin{equation*}\label{eq:ground-WLP-coalg-def}
\begin{array}{ccl}
 	w(A)  \colon  & \powset_f(\At) & \to \ K \\[0.5em]
 	& \{B_1,\dots, B_n\} & \mapsto  \
 	\begin{cases}
 		c \quad \text{if } c :: A \ot B_1,\dots, B_n \text{ is a clause in } \progW \\
 	   \plusUnit \quad \text{otherwise.}
	\end{cases}
\end{array}
\end{equation*}
\begin{exa}\label{ex:ground-WLP}
Consider the following ground program $\progGroundSP$ obtained as a simple ground version of $\progShortPath$. The underlying semiring is still $\semiK^{sp} = \lr{\nonNegReal \cup \{ +\infty\}, \min, +, +\infty, 0}$.

\medskip
\begin{center}
\begin{tabular}{rll | rll}
$0::$ & $\sf{initial(a)}$ & $\ot$ & $0::$ & $\sf{reachable(d)}$ & $\ot \sf{initial(d)}$ \\
$4::$ & $\sf{edge(a,c)}$ & $\ot$ & $0::$ & $\sf{reachable(a)}$ & $\ot \sf{reachable(c)}, \sf{edge(c,a)}$ \\
$20::$ & $\sf{edge(a,d)}$ & $\ot$ & $0::$ & $\sf{reachable(a)}$ & $\ot \sf{reachable(d)}, \sf{edge(d,a)}$\\
$9::$ & $\sf{edge(c,a)}$ & $\ot$ & $0::$ & $\sf{reachable(c)}$ & $\ot \sf{reachable(a)}, \sf{edge(a,c)}$  \\
$15::$ & $\sf{edge(c,d)}$ & $\ot$ & $0::$ & $\sf{reachable(c)}$ & $\ot \sf{reachable(d)}, \sf{edge(d,c)}$ \\
$16::$ & $\sf{edge(d,c)}$ & $\ot$ & $0::$ &
$\sf{reachable(d)}$ & $\ot \sf{reachable(a)}, \sf{edge(a,d)}$ \\
$0::$ & $\sf{reachable(a)}$ & $\ot \sf{initial(a)}$ & $0::$ & $
\sf{reachable(d)}$ & $\ot \sf{reachable(c)}, \sf{edge(c,d)}$ \\
$0::$ & $\sf{reachable(c)}$ & $\ot \sf{initial(c)}$  & & &
\end{tabular}
\end{center}
\begin{minipage}{0.5\textwidth}
The program describes the (single-sourced, directed) weighted graph (\ref{eq:simple-short-path-graph}) on the right:
\end{minipage}
\begin{minipage}{0.5\textwidth}
\begin{equation}\label{eq:simple-short-path-graph}
\vcenter{
\xymatrix{
a \ar@/_/[d]|{\whitespacearound{4}} \ar@/^/[dr]|{\whitespacearound{20}} &  \\
c \ar@/_/[r]_{\whitespacearound{15}} \ar@/_/[u]|{\whitespacearound{9}} & d \ar@/_/[l]|{\whitespacearound{16}}
}
}
\end{equation}
\end{minipage}
The (finite) set of atoms $\At_{gsp}$ is
\[
\{ \sf{initial(a)}, \dots, \sf{initial(c)}, \sf{edge(a,c)},\dots,  \sf{edge(d,c)}, \sf{reachable(a)}, \dots, \sf{reachable(c)}  \}.
\]
Some weights of the corresponding coalgebra $p^{gsp} \colon \At_{gsp}\to \multiFunc \fPowset(\At_{gsp})$ are presented below:
\begin{align*}
p^{gsp}(\sf{initial(a)}) & = 0 \{ \} \quad\quad p^{gsp}(\mathsf{edge(a,d)}) = 20 \{  \}  
\quad\quad p^{gsp}(\sf{edge(d,a)}) = \varnothing \\
p^{gsp}(\sf{reachable(a)}) & = 0 \{ \mathsf{initial(a)} \} + 0 \{ \mathsf{reachable(c)}, \mathsf{edge(c, a)} \} + 0 \{ \mathsf{reachable(d)}, \sf{edge(d, a)} \}
\end{align*}
Note that the $0$ in the above equations is the $\semiTimes$-unit $\prodUnit$ in $\semiK^{sp}$, and should not be confused with the $\semiPlus$-unit $\plusUnit$.
\end{exa}

Just as in the case of $\PLP$, we may represent operationally the recursive calculation of the weight $\weight(A)$ of a goal $A$ as a certain kind of tree, which we call \emph{weighted derivation tree}. These are weighted version of the and-or trees for pure logic programming \cite{gutpa1994}. They are formally defined analogously to the stochastic derivation trees in Definition \ref{def:stochastictrees_ground}, except that the probability labels are replaced by weight labels. We illustrate the concept with an example.

\begin{exa}\label{ex:ground-weight-der-tree}
In the context of Example \ref{ex:ground-WLP}, the weighted derivation tree for $\sf{reachable(a)}$ is partially depicted below. Note the different meaning for an atom-node and a clause-node to have no child. For instance, the clause-node following $\mathsf{init(a)}$ has no child because `$0 :: \mathsf{initial(a)} \ot$' is a clause in $\progGroundSP$, while the atom-node $\mathsf{edge(d,a)}$ has no child because there is no clause in $\progGroundSP$ whose head is $\mathsf{edge(d,a)}$.
\begin{equation}\label{eq:ex-WLP-der-tree}
\vcenter{
\xymatrix@R=5pt@C=3pt{
&&&& \mathsf{reach(a)} \ar@{-}[lllld]_{0} \ar@{-}[lld]^{0} \ar@{-}[rd]_{0} \ar@{-}[rrrrd]^{0} &&&& \\
\bullet \ar@{-}[d] && \bullet \ar@{-}[dl] \ar@{-}[dr] &&& \bullet \ar@{-}[dl] \ar@{-}[dr] &&& \bullet \ar@{-}[dl] \ar@{-}[dr] && \\
\mathsf{init(a)} \ar@{-}[d]_{0} & \mathsf{reach(a)} \ar@{.}[d] && \mathsf{edge(a,a)} & \mathsf{reach(c)} \ar@{.}[d] && \mathsf{edge(c,a)} \ar@{-}[d]_{9} & \mathsf{reach(d)} \ar@{.}[d] && \mathsf{edge(d,a)} \\
\bullet &&&& && \bullet &&&
}
}
\end{equation}

\end{exa}

By replacing $\F$ by $\multiFunc$ in Construction \ref{constr:terminalseq_ground}, one may construct the terminal sequence for the functor $\At\times \multiFunc \fPowset(-)$, which by accessibility \cite{worrell1999terminal} converges to a limit $X_\gamma$, yielding the final $\At\times \multiFunc \fPowset$-coalgebra $X_{\gamma} \cong \At\times \multiFunc \fPowset(X_{\gamma})$ and a unique coalgebra morphism $\wFinalSem{-}{w}$ as below:
\begin{equation*}
\xymatrix@R=15pt@C+75pt{
\At \ar@{-->}[r]^{\wFinalSem{-}{w}} \ar[d]^{<\id, w>} & X_{\gamma} \ar[d]^{\iso} \\
\At \times \multiFunc \fPowset (\At) \ar[r]^{\id \times \multiFunc \fPowset (\wFinalSem{-}{w})} & \At \times \multiFunc \fPowset(X_{\gamma}).
}
\end{equation*}
By construction, one may readily verify that weighted derivation trees are elements of $X_{\gamma}$, and $\wFinalSem{-}{w}$ maps an atom $A$ to its weighted derivation tree, computed according to the program $\progW$. Thus we can define the derivation semantics for $\WLP$ as the map $\wFinalSem{-}{w}$. Note the weight $\weight^{\scriptscriptstyle w}(A)$ of a goal is effectively computable from its semantics $\wFinalSem{A}{w}$ --- see Appendix \ref{app:compute-weight} for details.


\subsection{Coalgebraic Semantics for Arbitrary $\WLP$ Programs}\label{sec:WLP-generalcase}
In this subsection we generalise our approach to arbitrary weighted logic programs, possibly including variables and functions. We shall assume that the semiring $\semiK$ underlying the $\WLP$ programs is countably $\semiPlus$-complete, namely $\left( \semiSum_{c\in C} c \right) \in K$ and $d \semiTimes \left( \semiSum_{c\in C} c \right) = \semiSum_{c\in C} (d \semiTimes c)$, for arbitrary countable subset $C \subseteq K$. As we will see, the reason for such assumption is that there may exist countably many substitutions that match the head of some clause to the goal, generating countably many different instances of the body. The weight semantics requires adding up all these countably many weights, whence their sum should be an element of $\semiK$. 

The construction leading to a coalgebraically defined derivation semantics $\wFinalSemGen{-}{}$ for $\WLP$ is fairly similar to the one for $\PLP$ (Subsection \ref{sec:coalg_general}), thus we confine ourselves to highlighting the main steps and emphasise the differences between the two approaches. 

The derivation semantics $\wFinalSemGen{-}{}$ will associate a goal with its resolution by unification in a program. This resolution will be represented as a tree, which we call \emph{weighted saturated derivation tree}. Formally, these are defined analogously to stochastic saturated derivation trees (Definition~\ref{def:stoch_saturate_tree}), with the difference that there is no need of clause-nodes (for reasons detailed in Remark~\ref{rmk:diffprobandweight} below), and labels on edges departing from substitution-nodes are now weights rather than probabilities. 

Before providing a coalgebraic modelling of the semantics, we establish some preliminary intuition via our leading example.
\begin{exa}\label{ex:gen-weight-der-tree}
Recall program $\progShortPath$ from Example \ref{ex:weight-short-path}. We show below part of the weighted saturated derivation tree for the goal $\mathsf{reachable(x)}$ (in context $1$), where we write $\varnothing$ for the empty substitution. Compared with the ground case (see \eqref{eq:ex-WLP-der-tree} in Example~\ref{ex:ground-weight-der-tree}), the tree in \eqref{eq:ex-sat-weighted-der-tree} has an extra layer of `substitution-nodes', labelled with the substitutions applied to the goal during resolution by unification. For instance, the node labelled with $\sf{x \mapsto b}$ has one child for each substitution instance of the body $\{ B_1,\dots, B_k \}\tau$ satisfying that $c:: H\ot B_1,\dots, B_k$ is a clause in $\progShortPath$ and  the head $H$ term-matches the goal $\sf{reach(x)}[\sf{x \mapsto b}] = \sf{reach(b)}$ via substitution $\tau$. 
\begin{equation}\label{eq:ex-sat-weighted-der-tree}
\vcenter{
\xymatrix@R=5pt@C=5pt{
&&&& \mathsf{reach(x)} \ar@{-}[dllll] \ar@{-}[d] \ar@{-}[drrrr] \ar@{-}[drrrrrr] &&&&&& \\
\mathsf{x\mapsto a} \ar@{.}[d] &&&& \mathsf{x\mapsto b} \ar@{-}[dlll]_{0} \ar@{-}[dl]^{0} \ar@{-}[dr] \ar@{-}[drrr]^{0} &&&& \mathsf{x\mapsto x} \ar@{.}[d] && \cdots & \\
& \instanceNode \ar@{-}[d] && \instanceNode \ar@{-}[dl] \ar@{-}[dr] && \cdots && \instanceNode \ar@{-}[dl] \ar@{-}[dr] &&& \\
& \mathsf{init(b)} \ar@{-}[d] & \mathsf{reach(a)} \ar@{.}[d] && \mathsf{edge(a,b)} \ar@{.}[d] && \mathsf{reach(d)} \ar@{.}[d] && \mathsf{edge(d,b)} \ar@{.}[d] && \\
& \varnothing &&&&&&&&& \\
}
}
\end{equation}
 As mentioned, with respect to stochastic saturated derivation trees (see e.g.~\eqref{eq:exampletreegeneral}), the semantics represented in \eqref{eq:ex-sat-weighted-der-tree} does not require the presence of clause-nodes (which were labelled with $\clauseCnode$ in~\eqref{eq:exampletreegeneral}).
\end{exa}
Similarly to the case of $\PLP$, we will describe resolution by unification in two steps: first, one considers all substitutions $\theta$ of the goal $A$. Then, each substitution instance $A \theta$ of the goal is compared by term-matching to the heads of clauses in the program $\progW$. The function $u$ performing term-matching of $A\theta$ in $\progW$ associates a set $\{C_1,\dots,C_m\}$ of atoms with an element of $\semiK$, computed as follows:
\begin{equation}\label{eq:WLP-term-match}
u(A \theta) \colon \{ C_1,\dots, C_m \} \mapsto \semiSum_{
\substack{
c :: H \ot B_1, \dots, B_k \text{is a clause of $\progW$}\\ \text{such that } \exists \tau \text{ with } ~ H\tau = A\theta \\
\text{and } \{B_1,\dots, B_k\}\tau = \{ C_1,\dots, C_m \}
}} c
\end{equation}
Intuitively, for each clause $c :: H \ot B_1,\dots, B_k$ in $\progW$ and substitution $\tau$ that matches the head $H$ with $A\theta$, $u(A\theta)$ assigns weight $c$ to $\{ B_1, \dots, B_k \} \tau$. There are three observations to make here. First, it is possible that there is a different clause $c' :: H' \ot B'_1,\dots, B'_{k'}$ and substitution $\tau'$ such that $H'\tau' = A\theta$, and $\{ B'_1,\dots, B'_{k'} \}\tau' = \{ B_1,\dots, B_k \}\tau = \{ C_1,\dots, C_m \}$. In this case, the weight $c'$ is also added to the value of $u(A\theta)(\{ C_1,\dots, C_m \})$. Second, there may exist a different substitution $\sigma$ such that $H\tau = H\sigma$ and $\{ B_1,\dots, B_k \}\tau = \{ B_1,\dots, B_k \}\sigma$. In this case, no extra weight $c$ is added to $u(A\theta)(\{ B_1,\dots, B_k \}\tau)$. Third, the support $\supp(u(A\theta))$ of the function might be countably infinite. This is due to the possible existence of a clause $H\ot B_1,\dots, B_k$ and countably many substitutions $\tau_1, \tau_2,\dots$ such that $H\tau_i = A\theta$ while $\{ B_1,\dots, B_k \}\tau_i \neq \{ B_1,\dots, B_k \}\tau_j$, for all distinct $i, j = 1,2,\dots$. 

For example, the application of $u$ to $\sf{reach(b)}$ in Example~\ref{ex:gen-weight-der-tree} can be visualised in tree form as the first two layers of the subtree starting at $\sf{x \mapsto b}$. 

Towards a categorical treatment of resolution by unification, we need to introduce the \emph{countable} (because of the third observation above) multiset functor $\cMultiFunc \colon \sets\to \sets $ over a $\semiPlus$-complete semiring $\semiK$. On objects, $\cMultiFunc (X) = \{  \phi \colon X\to \semiK \mid \supp(\phi) ~{\text{is at most countable}} \}$. On morphisms, $\cMultiFunc(h) \colon \cMultiFunc (X) \to \cMultiFunc (Y)$ maps $\phi$ to $\lambda (y\in Y). \semiSum_{x\in h^{-1}(y)} \phi(x)$. Now, lifting functors from $\sets$ to $\sets^{|\LawOp|}$ as in \eqref{eq:TMcoalg}, the term-matching part $u$ of the resolution can be expressed as a $\lift{\cMultiFunc} \lift{\fPowset}$-coalgebra in $\sets^{|\LawOp|}$:
\begin{equation}\label{eq:WLP-TMcoalg}
	\coalgU \colon \U\At\to \lift{\cMultiFunc} \lift{\fPowset} \U\At
\end{equation}

\begin{rem}\label{rmk:diffprobandweight} Note the coalgebra type of term-matching for $\WLP$ ($\lift{\cMultiFunc} \lift{\fPowset}$) is simpler than the one for $\PLP$ ($\liftF \lift{\powset_c} \lift{\powset_f}$), as in \eqref{eq:TMcoalg}. In fact, in the distribution semantics of $\PLP$, the use of a clause in a proof should be counted at most once. This is taken care by the extra layer $\liftF$ which allows to identify different substitution instances of the same clause in a single tree. Instead, in the weight semantics every use of a clause counts towards calculating the total weight (\emph{cf.} \eqref{eq:cal-weight}). Thus it is unnecessary to keep track of whether clauses are used more than once. 

 This difference is reflected in the tree representation of the two semantics, compare e.g. \eqref{eq:ex-sat-weighted-der-tree} with \eqref{eq:exampletreegeneral}. In the weighted saturated trees, there is no need for clause-nodes, which are instead present in stochastic saturated trees. \end{rem}

Analogously to the $\PLP$ case, once term matching is modelled as a coalgebra, we can then exploit the unit $\eta$ of the adjunction $\funcU \dashv \funcK$ (see \eqref{eq:saturationadju}) to model the retrieval of all the substitutions of the goals. This will yield a $\funcK \lift{\cMultiFunc} \lift{\fPowset} \funcU$-coalgebra in $\sets^{\LawOp}$, capturing resolution by unification:
\begin{equation}\label{eq:w-sharp-compose}
\saturate{\coalgU} \coloneqq \At \xto{\eta_{\At}} \funcK \funcU \At \xto{\funcK \coalgU} \funcK \lift{\cMultiFunc} \lift{\fPowset} \funcU \At
\end{equation}
Spelling out the definitions, given some $A \in \At(n)$,
\[
\saturate{\coalgU}_n \colon A \mapsto \lr{ \coalgU_m (A\theta) }_{\theta \in \LawOp[n,m]}
\]

Finally, we can formulate the derivation semantics for general $\WLP$ via the cofree coalgebra. First, one may generate the terminal sequence for the functor $\At\times \funcK \lift{\cMultiFunc} \lift{\fPowset} \funcU$, following a similar procedure as Construction \ref{constr:terminalseq_general}. Since $\cMultiFunc$ is accessible, the terminal sequence converges, say at $Z_\gamma$. Then the derivation semantics $\intp{-}_{\saturate{\coalgU}}$ is defined by the universal mapping property of the final coalgebra:
\begin{equation}
  \vcenter{
\xymatrix@R=15pt@C+75pt{
\At \ar@{-->}[r]^{\intp{-}_{\saturate{\coalgU}}} \ar[d]_{ <id , \saturate{\coalgU} >}  & Z_\gamma \ar[d]^{\iso } \\
\At\times \funcK \lift{\cMultiFunc} \lift{\fPowset} \funcU \At \ar[r]_{id \times  \funcK \lift{\cMultiFunc} \lift{\fPowset} \funcU \intp{-}_{\saturate{\coalgU}} } & \At\times \funcK \lift{\cMultiFunc} \lift{\fPowset} \funcU Z_\gamma
}}
\end{equation}
As expected, weighted saturated derivation trees can be regarded as elements of the final coalgebra $Z_\gamma$, and $\intp{-}_{\saturate{\coalgU}}$ maps an atom $A$ to its weighted saturated derivation tree. 

\section{Conclusion}\label{sec:conclusion}
This work proposed a coalgebraic semantics for probabilistic logic programming and weighted logic programming, as an extension of the framework in \cite{komendantskaya2010coalgebraic,BonchiZ15} for pure logic programming. Our approach consists of the following three steps. First we represent the programs as coalgebras of appropriate type. Next we define the derivation semantics in terms of certain derivation trees, which can be recovered as objects in the corresponding final coalgebra. Finally, we explain how to retrieve the semantics that is most commonly found in the literature (distribution semantics for $\PLP$ and weight semantics for $\WLP$) from our coalgebraic semantics. While the first two steps simply generalise  the approach in \cite{BonchiZ15}, the last step has no counterpart in pure logic programming, and it requires extra categorical constructions (distribution trees) and computations (Appendix \ref{sec:algo_ground}, \ref{app:general_computation}, \ref{app:compute-weight}).

In extending our coalgebraic semantics from ground to arbitrary $\PLP$ programs, we adopted the `saturated' approach of \cite{BonchiZ15}. The `lax naturality' approach of \cite{komendantskaya2011coalgebraic} offers an alternative route (see \cite{komen2018logic} for a detailed comparison), based on the observation that, even though first-order logic programs fail to be natural transformations, they are lax natural in a suitable order enriched category. Developing a coalgebraic lax semantics for $\PLP$ and $\WLP$ would yield  derivation trees representing computations by term-matching rather than general unification. The extra information given by lax naturality could be used for algorithmic purposes, similarly to what is done in \cite{KomendantskayaSH14,KomendantskayaPS13} for pure logic programming. We leave the development of lax semantics for $\PLP$ and $\WLP$ as future work.

As follow-up work, we plan to investigate in two main directions. First, starting from the observation that Bayesian networks (over binary variables) can be seen as $\PLP$ programs, we want to understand Bayesian reasoning within our coalgebraic framework. Recent works on a categorical semantics for Bayesian probability (in particular \cite{JacobsZ19,ChoJ19}) provide a starting point for this research. A major challenge is that standard translations of Bayesian networks into $\PLP$ (such as the one in \cite{Poole08}) require to incorporate negation in the logical syntax. Thus a preliminary step will be to model coalgebraically $\PLP$ with negation, which we plan to do following the credal set approach outlined in \cite{cozman2019}.

A second research thread is investigating the semantics of coinductive logic programs in the quantitative setting of $\PLP$ and $\WLP$. This is an essentially unexplored perspective, which we will tackle building on research on coinduction in the coalgebraic semantics of standard logic programming \cite{Komendantskaya18PositionPaper,
GuptaCoinductiveLogicProgramming,
basold2019}. 

\bibliographystyle{alpha}
\bibliography{plp-lmcs}

\newpage
\appendix

\section{Computability of the Distribution Semantics (Ground Case)} \label{sec:algo_ground}
\smallskip\noindent\textbf{Computing with distribution trees.} As a justification for our tree representation of the distribution semantics, we claimed that the probability $\Pr_\Prog(A)$ associated with a goal (see \eqref{eq:distrsemantics}) can be straightforwardly computed from the corresponding distribution tree $\intd{A}_p$. This appendix supplies such an algorithm. Note that this serves just as a proof of concept, without any claim of efficiency compared to pre-existing implementations. 
In this section we fix a ground $\PLP$ program $\Prog$ with atoms $\At$, a goal $A \in \At$ and the distribution tree $\mathcal{T}$ for $A$ in $\Prog$ (Definition \ref{def:distrtree_ground}).
First, we may assume that the distribution tree $\mathcal{T}$ is finite. Indeed, in the ground case infinite branches only result from loops, namely multiple appearances of an atom in some path, which can be easily detected. We can always prune the subtrees of $\mathcal{T}$ rooted by atoms that already appeared at an earlier stage: this does not affect the computation of $\Pr_\Prog(A)$, and it makes $\mathcal{T}$ finite.
Next, we introduce the concept of \emph{deterministic} subtree. Basically a deterministic subtree selects one world-node at each stage. Recall that every clause-node in $\mathcal{T}$ represents a clause in $\pureProg{\Prog}$, whose head is the label of its atom-grandparent, and body consists of the labels of its atom-children.
\begin{defi}\label{def:det_subtree}
A subtree $\mathcal{S}$ of $\mathcal{T}$ is \emph{deterministic} if (i) it contains exactly one child (world-node) for each atom-node and all children for other nodes, and (ii) for any distinct atom-nodes $s,t$ in $\mathcal{S}$ with the same label, $s$ and $t$ have their clause-grandchildren representing the same clauses.
\end{defi}
The idea is that $\mathcal{S}$ describes a computation in which the choice of a possible world (i.e., a sub-program of $\Prog$) associated to any atom $B$ appearing during the resolution is uniquely determined. Because of this feature, each deterministic subtree uniquely identifies a set of sub-programs of $\Prog$, and together the deterministic subtrees of $\mathcal{T}$ form a \emph{partition} over the set of these sub-programs (see Proposition~\ref{prop:det_subtree} below). 

Since $\mathcal{T}$ is finite, it is clear that we can always provide an enumeration of its deterministic subtrees. We can now present our algorithm, in two steps. First, Algorithm \ref{alg.prob.subtree} computes the probability associated with a deterministic subtree. Second, Algorithm \ref{alg.prob.goal} computes $\Pr_{\Prog}(A)$ by summing up the probabilities found by Algorithm \ref{alg.prob.subtree} on all the deterministic subtrees of $\mathcal{T}$ that contains a refutation of $A$. Below we write `$\mathrm{label}(s \to t)$' for the probability value labelling the edge from $s$ to $t$.
\begin{algorithm}[H]
\caption{Compute probability of a deterministic subtree}
\label{alg.prob.subtree}
\hspace*{\algorithmicindent} \textbf{Input:} A deterministic subtree $\mathcal{S}$ of $\mathcal{T}$\\
\hspace*{\algorithmicindent} \textbf{Output:} The probability of $\mathcal{S}$\\
\begin{algorithmic}[1]
	\State{prob\_list = [\ ]}
	\For{atom-node $s$ in $\mathcal{S}$}
		\If{$s$ has child}
			\State{prob\_list.append(label($s \to \mathrm{child}(s)$)) }
		\EndIf
	\EndFor
	\If{prob\_list == [\ ]}
		\State{\Return 0}
	\Else
		~prob = product of values in prob\_list
		\State{\Return prob}
	\EndIf
\end{algorithmic}
\end{algorithm}
\begin{algorithm}[H]
\caption{Compute probability of a goal}
\label{alg.prob.goal}
\hspace*{\algorithmicindent} \textbf{Input:} The distribution tree $\mathcal{T}$ of $A$ in $\Prog$\\
\hspace*{\algorithmicindent} \textbf{Output:} The success probability $\Pr_\Prog(A)$\\
\begin{algorithmic}[1]
	\State{prob\_suc = 0}
	\For{deterministic subtree $\mathcal{S}$ of $\mathcal{T}$}
		\If{$\mathcal{S}$ proves $A$}
			\State{prob\_suc += \textbf{Algorithm \ref{alg.prob.subtree}}($\mathcal{S}$)}
		\EndIf
	\EndFor
	\State{\Return prob\_suc}
\end{algorithmic}
\end{algorithm}
The above procedure terminates because $\mathcal{T}$ is finite and every for-loop is finite. We now focus on the correctness of the algorithm.

\smallskip\noindent\textbf{Correctness.} 
As mentioned, a world-node in a deterministic subtree can be seen as a choice of clauses: one chooses the clauses represented by its clause-children, and discards the clauses represented by its `complement' world. For correctness, we make  this precise, via the following definition.
\begin{defi}\label{def:accept_reject}
Given a clause $\clauseC$ in $\Prog$, a deterministic subtree $\mathcal{S}$ of $\mathcal{T}$, a world-node $t$ and its atom-parent $s$ in $\mathcal{S}$, we say $t$ \emph{accepts} $\clauseC$ if $\head{\clauseC} = \mathrm{label}(s)$ and there is a clause-child of $t$ that represents $\clauseC$; $t$ \emph{rejects} $\clauseC$ if $\head{\clauseC} = \mathrm{label}(s)$ but no clause-child of $t$ represents $\clauseC$. We say $\mathcal{S}$ \emph{accepts} (\emph{rejects}) $\clauseC$ if there exists a world-node $t$ in $\mathcal{S}$ that accepts (rejects) $\clauseC$.
\end{defi}
Note that Definition \ref{def:det_subtree}, condition (ii) prevents the existence of world-nodes $t,t'$ in $\mathcal{S}$ such that $t$ accepts $\clauseC$ and $t'$ rejects $\clauseC$. Thus the notion that $\mathcal{S}$ accepts (rejects) $\clauseC$ is well-defined. We denote the set of clauses accepted and rejected by $\mathcal{S}$ by $\Acc(\mathcal{S})$ and $\Rej(\mathcal{S})$, respectively. Then we can define the set $\repSubProg{\mathcal{S}}$ of sub-programs  represented by $\mathcal{S}$ as
\begin{equation} \label{eq:def_repSubProg}
\repSubProg{\mathcal{S}} \coloneqq \{ \progL \subseteq |\Prog| \mid \forall \clauseC \in \Acc(\mathcal{S}), \clauseC \in \progL;  \forall \clauseC' \in \Rej(\mathcal{S}), \clauseC' \notin \progL \}
\end{equation}
We will prove the correctness of the algorithm through the following basic observations on the connection between deterministic subtrees and the sub-programs they represent:
\begin{prop}\label{prop:det_subtree}
Suppose $\mathcal{S}$ is a deterministic subtree of the distribution tree $\mathcal{T}$ of $A$.
\begin{enumerate}
	\item $\{ \repSubProg{\mathcal{S}} \mid \mathcal{S} \text{ is deterministic subtree of } \mathcal{T}\}$ forms a partition of $\powset(\Prog)$.
	\item Either $\progL \vdash A$ for all $\progL \in \repSubProg{\mathcal{S}}$ or $\progL \not\vdash A$ for all $\progL \in \repSubProg{\mathcal{S}}$.
	\item $\sum_{\progL \in \repSubProg{\mathcal{S}}} \Pr_\Prog(\progL) = \prod_{r_i \in \mathcal{S}} r_i$, where the $r_i$s are all the probability labels appearing in $\mathcal{S}$ (on the $\text{atom-node}\to \text{world-node}$ edges).
\end{enumerate}
\end{prop}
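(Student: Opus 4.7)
The plan is to prove the three claims in order, using that (after pruning) $\mathcal{T}$ is finite and that each world-node of a deterministic subtree $\mathcal{S}$ locally commits to a subset of clauses whose head matches the parent atom-node label. Throughout, let $M$ denote the set of clauses in $\pureProg{\Prog}$ whose heads appear as atom-labels in $\mathcal{S}$, so that $\Acc(\mathcal{S})$ and $\Rej(\mathcal{S})$ partition $M$.

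For (1), I first construct a map $\progL \mapsto \mathcal{S}_\progL$ from sub-programs of $\pureProg{\Prog}$ to deterministic subtrees of $\mathcal{T}$, defined top-down: at every atom-node with label $A'$, select the unique world-child representing $\progL \cap \{\clauseC \mid \head{\clauseC}=A'\}$. Condition~(ii) of Definition~\ref{def:det_subtree} holds automatically since the selection depends only on the label $A'$ and on $\progL$, and $\progL \in \repSubProg{\mathcal{S}_\progL}$ by construction, yielding coverage. For disjointness, two distinct deterministic subtrees must disagree at some atom-node about the selected world-child, forcing some clause $\clauseC$ to lie in $\Acc(\mathcal{S}) \cap \Rej(\mathcal{S}')$, which prevents any sub-program from belonging to both $\repSubProg{\mathcal{S}}$ and $\repSubProg{\mathcal{S}'}$. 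Claim~(2) then follows from the observation that $\mathcal{T}$ enumerates every clause possibly invoked by an SLD-resolution starting from $A$: a clause whose head does not occur as an atom-label in $\mathcal{T}$ is never usable to prove $A$. Hence provability depends only on $\progL \cap M$, and all members of $\repSubProg{\mathcal{S}}$ agree on $M$ by definition of $\Acc(\mathcal{S})$ and $\Rej(\mathcal{S})$.

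For (3), a sub-program $\progL$ belongs to $\repSubProg{\mathcal{S}}$ iff $\progL \cap M = \Acc(\mathcal{S})$ and $\progL \setminus M$ is arbitrary, so summing $\Pr_\Prog(\progL)$ over $\repSubProg{\mathcal{S}}$ factorises into a fixed factor $\prod_{c \in \Acc(\mathcal{S})}\lab{c}\cdot \prod_{c \in \Rej(\mathcal{S})}(1-\lab{c})$ times the total mass of the Bernoulli product distribution on $\pureProg{\Prog} \setminus M$, which equals~$1$. By Definition~\ref{def:distrtree_ground}, each atom-node-to-world-node edge in $\mathcal{S}$ carries the label $\prod_{c \in X}\lab{c} \cdot \prod_{c' \in C \setminus X}(1-\lab{c'})$, where $C$ is the set of clauses with head equal to the parent atom-label and $X$ is the selected world; multiplying these labels over the distinct atom-labels of $\mathcal{S}$ (identifying repeated occurrences via determinism~(ii)) yields exactly the same product $\prod_{c\in \Acc(\mathcal{S})}\lab{c}\cdot\prod_{c\in \Rej(\mathcal{S})}(1-\lab{c})$, completing the identity. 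The main obstacle I foresee is the bookkeeping for atoms appearing at multiple nodes of $\mathcal{S}$: while determinism~(ii) enforces consistent world selection, interpreting $\prod_{r_i\in \mathcal{S}} r_i$ requires care to avoid double counting. Pruning $\mathcal{T}$ so that no path repeats an atom, together with the observation that only distinct atom-labels (rather than distinct atom-nodes) supply independent probability factors, is what ensures the identity in~(3) goes through cleanly.
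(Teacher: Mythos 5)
Your handling of claims (1) and (3) is essentially the paper's own argument: the same top-down construction of a deterministic subtree $\mathcal{S}_\progL$ from a sub-program $\progL$ (giving coverage), the same disagreement-at-a-common-atom-node argument for disjointness, and for (3) the same factorisation of $\sum_{\progL\in\repSubProg{\mathcal{S}}}\progPr_\Prog(\progL)$ into $\prod_{\clauseC\in\Acc(\mathcal{S})}\lab{\clauseC}\cdot\prod_{\clauseC'\in\Rej(\mathcal{S})}(1-\lab{\clauseC'})$ times a Bernoulli mass equal to $1$. Your insistence that repeated atom-labels contribute a single factor is exactly the paper's closing remark that the product over world-nodes is taken modulo the equivalence identifying world-nodes with the same clause-children, so on that point you are, if anything, more explicit than the paper.

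The gap is in your justification of claim (2). You argue: clauses whose heads never occur as atom-labels in $\mathcal{T}$ cannot be used to prove $A$, \emph{hence} provability depends only on $\progL\cap M$, where $M$ is the set of clauses whose heads occur as atom-labels in $\mathcal{S}$. This is a non sequitur: there are in general clauses whose heads occur in $\mathcal{T}$ but \emph{not} in $\mathcal{S}$ (atoms reachable only through worlds that $\mathcal{S}$ did not select), these lie outside $M$, and members of $\repSubProg{\mathcal{S}}$ genuinely differ on them; your $\mathcal{T}$-level observation says nothing about why they cannot affect provability. The conclusion is nevertheless true, but it needs the argument the paper compresses into ``$\progL\vdash A$ iff $\mathcal{S}$ contains a successful refutation of $A$'': argue by induction on a derivation of $A$ from $\progL\in\repSubProg{\mathcal{S}}$ that it only uses clauses of $\Acc(\mathcal{S})$. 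Indeed, the clause applied at the root has head $A$, which labels the root of $\mathcal{S}$, so it lies in $M$ and hence in $\progL\cap M=\Acc(\mathcal{S})$; since it is accepted, its body atoms occur as atom-labels of $\mathcal{S}$ (they are children of the corresponding clause-node, or, after pruning, occur earlier on the same path), and the argument repeats for their sub-derivations. Therefore $\progL\vdash A$ iff $A$ is derivable from $\Acc(\mathcal{S})$ alone, a condition independent of the choice of $\progL\in\repSubProg{\mathcal{S}}$. With this replacement for your second step, the rest of your proof goes through as written.
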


\begin{proof}~
\begin{enumerate}
	\item Given any two distinct deterministic subtrees, there is an atom-node $s$ such that the subtrees include distinct world-child of $s$. So by \eqref{eq:def_repSubProg} the sub-programs they represent do not share at least one clause. Moreover, given a sub-program $\progL$, one can always identify a deterministic subtree $\mathcal{S}$ such that $\progL \in \repSubProg{\mathcal{S}}$, as follows: given the $A$-labelled root of $\mathcal{T}$, select the world-child $w$ of $A$ representing the (possibly empty) set $X$ of all clauses in $\progL$ whose head is $A$; then select the children (if any) of $w$, and repeat the procedure. 
	\item Note that a sub-program $\progL \in \repSubProg{\mathcal{S}}$ proves the goal $A$ iff $\mathcal{S}$ contains a successful refutation of $A$, and the latter property is independent of the choice of $\progL$. 
	\item  We refer to $\prod_{r_i \in \mathcal{S}} r_i$ as the probability of the deterministic subtree $\mathcal{S}$. For each sub-program $\progL \in \repSubProg{\mathcal{S}}$, its probability can be written as
	\begin{equation}\label{eq:prob_subprogram}
	\progPr_\Prog(\progL) = \left( \prod_{\clauseC\in \Acc(\mathcal{S})} \lab{\clauseC} \right) \cdot \left( \prod_{\clauseC'\in \Rej(\mathcal{S})} \left( 1 - \lab{\clauseC'} \right) \right) \cdot \progPr_{\Prog\setminus (\Acc \cup \Rej)} (\progL \setminus \Acc(\mathcal{S}))
	\end{equation}
	Note that $\repSubProg{\mathcal{S}}$ can also be written as $\{ X \cup \Acc(\mathcal{S}) \mid X\subseteq \Prog \setminus (\Acc(\mathcal{S}) \cup \Rej(\mathcal{S})) \}$, so
	\begin{equation}\label{eq:sum_is_one}
	\sum_{\progL\in \repSubProg{\mathcal{S}}} \progPr_{\Prog\setminus (\Acc \cup \Rej)} (\progL\setminus \Acc(\mathcal{S})) = 1.
	\end{equation}
	Applying equation \eqref{eq:sum_is_one} to the sum of \eqref{eq:prob_subprogram} over all $\progL\in \repSubProg{\mathcal{S}}$, we get
	\begin{equation}\label{eq:sum_subprog}
	\sum_{\progL \in \repSubProg{\mathcal{S}}} \progPr_{\Prog} (\progL)  = \prod_{\clauseC\in \Acc(\mathcal{S})} \lab{\clauseC} \cdot \prod_{\clauseC'\in \Rej(\mathcal{S})} (1 - \lab{\clauseC'})	
	\end{equation}
	For each world-node $t$ and its atom-parent $s$, we can use the terminology in Definition \ref{def:accept_reject}, and express $\mathrm{label}(s\to t)$ (see Definition \ref{def:distrtree_ground}) as
	\begin{equation}\label{eq:label_as_accept_reject}
	\mathrm{label}(s\to t) = \prod_{t\text{ accepts }\clauseC}\lab{\clauseC} \cdot \prod_{t\text{ rejects }\clauseC'}(1 - \lab{\clauseC'}).
	\end{equation}
	Applying \eqref{eq:label_as_accept_reject} to the whole deterministic subtree $\mathcal{S}$, we obtain  
	\begin{align*}
	\sum_{\progL\in \repSubProg{\mathcal{S}}} \progPr_\Prog(\progL)  \quad & \eql{\eqref{eq:sum_subprog}} \prod_{\clauseC\in \Acc(\mathcal{S})} \lab{\clauseC} \cdot \prod_{\clauseC'\in \Rej(\mathcal{S})} (1 - \lab{\clauseC'}) \\
	& \eql{\mathrm{Def. } \ref{def:accept_reject}} \prod_{(\text{world-node $t$ in } \mathcal{S})} \left( \prod_{t\text{ accepts }\clauseC}\lab{\clauseC} \cdot \prod_{t\text{ rejects }\clauseC'}(1 - \lab{\clauseC'}) \right) \\
	& \eql{\eqref{eq:label_as_accept_reject}} \prod_{r_i\in \mathcal{S}} r_i
	\end{align*}
\end{enumerate}
If we say two world-nodes $t$ and $t'$ are equivalent if their clause-children represent exactly the same clauses in $\Prog$, then the $\prod_{(\text{world-node $t$ in } \mathcal{S})}$ in the above calculation visits every world-node exactly once modulo equivalence.
\end{proof}
We can now formulate the success probability of $A$ as follows
\begin{align*}
\Pr_\Prog (A) & = \sum_{\pureProg{\Prog} \supseteq \progL \vdash A} \progPr_\Prog(\progL)  \quad\quad \eql{(\mathrm{Prop.}\ref{prop:det_subtree}, 1\& 2)} \quad\quad \sum_{\mathcal{S} \vdash A} \sum_{\progL\in \repSubProg{\mathcal{S}}} \progPr_\Prog(\progL) \\
& \eql{\eqref{eq:sum_subprog}} \quad \sum_{\mathcal{S} \vdash A} \left( \prod_{\clauseC\in \Acc(\mathcal{S})} \lab{\clauseC} \cdot \prod_{\clauseC'\in \Rej(\mathcal{S})} (1 - \lab{\clauseC'}) \right) \qquad  \eql{(\mathrm{Prop. }\ref{prop:det_subtree}, 3)}  \qquad    \sum_{\mathcal{S} \vdash A} \prod_{r_i\in S} r_i 
\end{align*}
In words, this is exactly Algorithm \ref{alg.prob.goal}: we sum up the probabilities of all deterministic subtrees $\mathcal{S}$ of the distribution tree $\mathcal{T}$ which contain a proof of $A$.

\section{Computability of the Distribution Semantics (General Case)} \label{app:general_computation}
Computability of the distribution semantics for arbitrary $\PLP$ programs relies on the substitution mechanism employed in the resolution. This aspect deserves a preliminary discussion. Traditionally, logic programming has both the theorem-proving and problem-solving perspectives \cite{komendantskaya2018}. From the problem-solving perspective, the aim is to find a refutation of the goal $\ot G$, which amounts to finding a proof of \emph{some substitution instance} of $G$. From the theorem-proving perspective, the aim is to search for a proof of the goal $G$ itself as an atom. The main difference is in the substitution mechanism of resolution: unification for the problem-solving and term-matching for the theorem-proving perspective. 
We will first explore computability within the theorem-proving perspective. As resolution therein is by term-matching, the probability $\probTM(A)$ of proving a goal $A$ in a $\PLP$ program $\Prog$ is formulated as 
$\probTM(A) \coloneqq \sum\limits_{\pureProg{\Prog} \supseteq \progL \proves A} \progPr_\Prog(\progL)$, 
where $\progL \proves A$ means that $A$ is derivable in the sub-program $\progL$ (not to be confused with $\progL\vdash A$, which stands for \emph{some substitution instance} of $A$ being derivable in $\progL$, see~\eqref{eq:distrsemantics}).


In our coalgebraic framework, the distribution semantics for general $\PLP$ programs is represented on `saturated' trees, in which computations are performed by unification. However, following \cite{BonchiZ15}, one can define the \emph{TM (\textbf{T}erm \textbf{M}atching) distribution tree} of a goal $A$ in a program $\Prog$ by `desaturation' of the saturated distribution tree for $A$ in $\Prog$. The coalgebraic definition, for which we refer to \cite{BonchiZ15}, applies pointwise on the saturated tree the counit $\epsilon_{\U\At} \colon \U\K\U\At \to \U\At$ of the adjunction $\U\dashv \K$ (\emph{cf.}~\eqref{eq:saturationadju}). The TM distribution tree which results from `desaturation' can be described very simply: at each layer of the starting saturated distribution tree, one prunes all the subtrees which are not labelled with the identity substitution $\id := x_1 \mapsto x_1, x_2 \mapsto x_2, \dots$. In this way, the only remaining computation are those in which resolution only applies a non-trivial substitution on the clause side, that is, in which unification is restricted to term-matching.

\smallskip\noindent\textbf{Computability of term-matching distribution semantics.} One may compute the success probability $\probTM(A)$ in $\Prog$ from the TM distribution tree of $A$ in $\Prog$. The computation goes similarly to Algorithm~\ref{alg.prob.goal} : the problem amounts to calculating the probabilities of those deterministic subtrees of the distribution tree which prove the goal. We confine ourselves to some remarks on the aspects that require extra care, compared to the ground case.
\begin{enumerate}
\item The probability $\probTM(A)$ is not computable in whole generality. It depends on whether one can decide all the proofs of $A$ in the pure logic program $\pureProg{\Prog}$, and there are various heuristics in logic programming for this task.
	\item It is still possible to decide whether a subtree is deterministic, but the algorithm in the general case is a bit subtler, as it is now possible that two different goals match the same clause (instantiated in two different ways).
	\item When calculating the probability of a deterministic subtree in the TM distribution tree, multiple appearances of a single clause (possibly instantiated with different substitutions) should be counted only once. In order to ensure this one needs to be able to identify which clause is applied at each step of the computation described by the  distribution trees: this is precisely the reason of the addition of the clause labels in the coalgebra type of these trees, as discussed in Section \ref{sec:gen_dist_semantics}.
\end{enumerate}

We conclude by briefly discussing the problem-solving perspective, in which resolution is based on arbitrary unification rather than just term-matching. In standard SLD-resolution, computability relies on the possibility of identifying the \emph{most general} unifier between a goal and the head of a given clause. This can be done also within saturated distribution trees, since saturation supplies \emph{all} the unifiers, thus in particular the most general one. This means that, in principle, one may compute the distribution semantics based on most general unification from the saturated distributed tree associated with a goal, with similar caveats as the ones we described for the term-matching case. However, the lack of a satisfactory coalgebraic treatment of most general unifiers \cite{BonchiZ15} makes us prefer the theorem-proving perspective discussed above, for which desaturation provides an elegant categorical formalisation.

\section{Computability of the weight semantics}\label{app:compute-weight}
In this appendix we will show how the weight $\weight(A)$ of a goal $A$ is computable from its derivation semantics, i.e. the weighted derivation tree $\wFinalSem{A}{w}$. We focus on the ground case, and then conclude with a discussion on how the generalisation to the variable case goes.

One preliminary consideration concerns cycles in $\WLP$. Indeed, in a weighted derivation tree it may be the case that an atom-node has a directed path to another atom-node labelled with the same atom. While different ways to assign meaning to cycles may be justifiable (for instance in the context of (co)inductive logic programming), the standard choice (see e.g. \cite{shay2010}) is to regard any proof subtree containing a cycle as failing to prove the given goal. In concrete, this means that proof subtrees with cycles may be simply discarded when calculating the weight of the goal. 

To make this precise in our computation, recall that a proof subtree for a goal $A$ is a subtree of $\wFinalSem{A}{w}$ where we select exactly one child for each atom-node with children. We may define the weight of a proof subtree for $A$ as the weight of $A$ in that particular tree, calculated using \eqref{eq:cal-weight}. Note the overall weight of a goal $A$ in $\wFinalSem{A}{w}$ is then the $\semiPlus$-sum of the weights of its proof subtrees. 

This means that, from a computational viewpoint, discarding a proof subtree is equivalent to assigning weight $\plusUnit$ (the $\semiPlus$-unit) to that proof subtree. In order to assign weight $\plusUnit$ to each proof subtree with cycles, we look for the first appearance of a cycle, re-label the next edge with $\plusUnit$, and discard all the descendants. This works because, according to \eqref{eq:cal-weight} and the fact that $\plusUnit \semiTimes c = c \semiTimes \plusUnit = \plusUnit$ for any $c \in K$, any finite proof subtree containing an edge labelled with $\plusUnit$ has weight $\plusUnit$. We illustrate this idea with the following example.

\begin{exa}\label{ex:cycle-ground}
Recall $\progGroundSP$ in Example \ref{ex:ground-WLP}, and consider the goal $\sf{reachable(c)}$. Its weighted derivation tree $\wFinalSem{\sf{reachable(c)}}{\progGroundSP}$ includes a finite path (subtree) witnessing a proof of the goal.
\begin{equation*}
\mathsf{reach(c)} \xto{4} \bullet \to \mathsf{reach(a)}  \xto{0} \bullet \to \mathsf{init(a)} \xto{0} \bullet
\end{equation*}
Intuitively, this says that $\sf{c}$ is reachable from the initial state $\sf{a}$ in in \eqref{eq:simple-short-path-graph}, with weight $4$. However, in $\wFinalSem{\sf{reachable(c)}}{\progGroundSP}$ there are also other proof subtrees, which feature cycles, as for instance 
\begin{gather}\label{eq:short-path-ground-revisit}
\mathsf{reach(c)} \xto{4} \bullet \to \mathsf{reach(a)} \xto{9} \bullet \to \mathsf{reach(c)} \xto{4} \bullet \to \mathsf{reach(a)} \xto{0} \bullet \to \mathsf{init(a)} \xto{0} \bullet \\ \label{eq:short-path-ground-revisit2}
\mathsf{reach(c)} \xto{4} \bullet \to \mathsf{reach(a)} \xto{9} \bullet \to \mathsf{reach(c)} \xto{4} \bullet \to \mathsf{reach(a)} \xto{9} \bullet \to \mathsf{reach(c)} \to \cdots
\end{gather}
Our algorithm for computing the weight of $w(\sf{reachable(c)})$ will prune both \eqref{eq:short-path-ground-revisit} and \eqref{eq:short-path-ground-revisit2}, letting them both become equal to
\begin{equation}\label{eq:short-path-ground-revisit-cut}
\mathsf{reach(c)} \xto{4} \bullet \to \mathsf{reach(a)} \xto{+\infty} \bullet 
\end{equation}
where $+\infty$ is the element $\plusUnit$ in the semiring of $\progGroundSP$. To see that this procedure yields the correct result, note that the weight of the proof subtree \eqref{eq:short-path-ground-revisit-cut} is $4 + (+\infty) = +\infty$, which does not affect the calculation of the weight ($= 4$) of $\sf{reachable(c)}$.
\end{exa}

\medskip
Now we give the generic algorithms which compute the weight of a goal $A$ given its weighted derivation tree $\wFinalSem{A}{w}$ in the program $\progW$. For readability, we divide our approach into 2 algorithms.
\begin{algorithm}[H]
\caption{Prune a weighted derivation tree}
\label{alg:prune-ground-weighted-tree}
\hspace*{\algorithmicindent} \textbf{Input:} a weighted derivation tree $\treeT$\\
\hspace*{\algorithmicindent} \textbf{Output:} $\treeT$ pruned all cycles\\
\begin{algorithmic}[1]
	\For{path $\pi$ in $\treeT$}
		\State{atom\_list = [ ]}
		\For{node $v$ in $\pi$}
		 	\If{$\mathsf{label}(v)$ in atom\_list}
		 		\State{$t = \text{grandparent of}~ v$, $u = \text{parent of}~ v$}
			 	\State{cut the descendants of $u$ from $\mathcal{T}$}
			 	\State{re-label the edge $t\to u$ by $\plusUnit$}
		 	\Else{}
		 		\State{atom\_list.append($\mathsf{label}(v)$)}
		 \EndIf
		\EndFor
	\EndFor
	\State{\Return tree $\treeT$}
\end{algorithmic}
\end{algorithm}

\begin{algorithm}[H]
\caption{Calculate weight of the root node from a pruned weighted derivation tree}
\label{alg:calulate-pruned-weighted-tree}
\hspace*{\algorithmicindent} \textbf{Input:} the weighted derivation tree $\treeT$ for the goal $A$\\
\hspace*{\algorithmicindent} \textbf{Output:} the weight of $A$\\
\begin{algorithmic}[1]
	\State{Prune $\treeT$ using Algorithm \ref{alg:prune-ground-weighted-tree}}
	\Function{Weight}{$x, \treeT$}
		\State{sum\_list = [ ]}
		\For{children $y$ of $x$}
			\State{prod\_list = [$\mathbf{label}(x\to y)$] \Comment{first get the weight of the clause} }
			\For{children $z$ of $y$}
				\State{prod\_list.append(W{\scriptsize{EIGHT}}($z, \treeT$))}
				\State{cur\_weight = $\semiProd$ prod\_list  \Comment{calculate the $\semiTimes$-product of all the values in prod\_list} }
			\EndFor
			\State{sum\_list.append(cur\_weight)}
		\EndFor
		\State{\textbf{return} $\semiSum$ sum\_list  \Comment{return the $\semiPlus$-sum of all the values in product\_list} }
	\EndFunction
	\State{$s$ = root of $\treeT$ \Comment{calculate the weight of node $s$ in $\treeT$}} 
	\State{\Return W{\scriptsize{EIGHT}}($s, \treeT$)}
\end{algorithmic}
\end{algorithm}
Algorithm \ref{alg:prune-ground-weighted-tree} uses depth-first search to find the first node whose atom label already appeared in some of its ancestors. This search procedure terminates because the set $\At$ of atoms is finite. Since a weighted derivation tree is finitely branching, the whole algorithm also terminates.

Algorithm \ref{alg:calulate-pruned-weighted-tree} is a typical divide-and-conquer algorithm, where the weight of an atom-node is calculated via the weights of its (atom-)grandchildren. The procedure is simply spelling out \eqref{eq:cal-weight} in the weighted derivation tree.

\medskip
We conclude by sketching how this algorithm generalises to one for general $\WLP$ programs. First, recall the theorem-proving and problem-solving perspectives on logic programming, as discussed in Appendix \ref{app:general_computation}. For $\WLP$, the theorem-proving aspect is prevalent in most applications (for example \cite{shay2010,eisner2007,durbin1998}). Within this perspective, the task is to compute the weight of the goal itself, rather than the weight of some substitution instance of the goal. To compute such term-matching weight semantics, say for a goal $A$ , one should first apply desaturation (as in Appendix \ref{app:general_computation}, see also \cite{BonchiZ15}) to the weighted saturated derivation tree $\intp{A}_{\saturate{\coalgU}}$, obtaining a new tree $T_A$. Concretely, $T_A$ is obtained from $\intp{A}_{\saturate{\coalgU}}$ by pruning any subtree which is rooted by substitutions other than the identity substitution.  

Second, one may tweak Algorithm \ref{alg:prune-ground-weighted-tree} and \ref{alg:calulate-pruned-weighted-tree} for ground $\WLP$ in order to deal with the substitution-nodes in `desaturated' weighted derivation trees. The term-matching weight semantics of $A$ can then be computed by applying the modified algorithms to $T_A$. Note that to guarantee termination of the computation, we require that for every atom $A$, there are only finitely many grounding $H\tau \ot B_1 \tau, \dots, B_k\tau$ of the clauses whose heads $H\tau$ is $A$. 

\section{Missing proofs}\label{app:missing-proofs}
\begin{prop}
The operation $\posw: \F\To \G$ defined in Definition \ref{def:possibleworldnat} is a natural transformation.
\end{prop}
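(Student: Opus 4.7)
The plan is to verify naturality by fixing an arbitrary function $f \colon X \to Y$ and $\phi \in \F(X)$, and then showing that the two sub-probability distributions $\posw_Y(\F(f)(\phi))$ and $\G(f)(\posw_X(\phi))$ on $\powset_f(Y)$ agree coefficient by coefficient for each finite $Z \subseteq Y$. A preliminary observation is that $\supp(\F(f)(\phi)) = f(\supp(\phi))$: indeed, $\F(f)(\phi)(y) = 1 - \prod_{x \in f^{-1}(y) \cap \supp(\phi)}(1 - \phi(x))$, which vanishes precisely when $f^{-1}(y) \cap \supp(\phi) = \varnothing$.

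With this in hand, the left-hand side evaluates to
\[
\posw_Y(\F(f)(\phi))(Z) \;=\; \prod_{z \in Z} \F(f)(\phi)(z) \cdot \!\!\prod_{z' \in f(\supp(\phi)) \setminus Z}\!\! (1 - \F(f)(\phi)(z'))
\]
whenever $Z \subseteq f(\supp(\phi))$, and $0$ otherwise. For the right-hand side, applying the pushforward definition of $\SubDist(\powset_f(f))$ gives $\G(f)(\posw_X(\phi))(Z) = \sum_{Y' \subseteq \supp(\phi),\, f(Y') = Z} r_{Y'}$. The central step is to stratify this sum: writing $A_z := f^{-1}(z) \cap \supp(\phi)$, the condition $f(Y') = Z$ with $Y' \subseteq \supp(\phi)$ is equivalent to presenting $Y' = \bigcup_{z \in Z} Y'_z$ as a disjoint union where each $Y'_z$ is a nonempty subset of $A_z$ (so the sum is empty, hence $0$, unless $Z \subseteq f(\supp(\phi))$). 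Regrouping the product defining $r_{Y'}$ along this partition, together with the contribution of the fibres $A_{z'}$ for $z' \notin Z$ (where no element of $Y'$ lies), yields
\[
r_{Y'} \;=\; \!\prod_{z' \in f(\supp(\phi)) \setminus Z}\!\prod_{x \in A_{z'}}\!\!(1 - \phi(x)) \;\cdot\; \prod_{z \in Z}\!\bigg(\!\prod_{x \in Y'_z} \!\!\phi(x) \cdot \!\!\prod_{x \in A_z \setminus Y'_z}\!\!(1 - \phi(x))\!\bigg).
\]

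The key identity is then the elementary computation
\[
\sum_{\varnothing \neq S \subseteq A} \prod_{x \in S} \phi(x) \prod_{x \in A \setminus S}(1 - \phi(x)) \;=\; 1 - \prod_{x \in A}(1 - \phi(x)),
\]
which is the probabilistic statement that $\por$ computes the probability of an independent union. Summing over the choices $Y'_z$ independently converts each inner factor into $1 - \prod_{x \in A_z}(1 - \phi(x)) = \F(f)(\phi)(z)$, while each outer factor $\prod_{x \in A_{z'}}(1 - \phi(x))$ equals $1 - \F(f)(\phi)(z')$. The expression matches the left-hand side exactly, and the edge case $Z \not\subseteq f(\supp(\phi))$ is handled because both sides then vanish. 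The only real subtlety is bookkeeping the partition of $\supp(\phi)$ induced by $f$ and the disjointness of the $A_z$'s; once this is laid out, the combinatorial identity above finishes the argument.
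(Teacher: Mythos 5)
Your proposal is correct and follows essentially the same route as the paper's proof: a coefficient-by-coefficient comparison of the two sub-distributions, a stratification of $\supp(\phi)$ into the fibres of $f$, and the same key identity $\sum_{\varnothing \neq S \subseteq A}\prod_{x\in S}\phi(x)\prod_{x\in A\setminus S}(1-\phi(x)) = 1-\prod_{x\in A}(1-\phi(x))$. The only difference is cosmetic --- you factor the pushforward sum into a product of per-fibre sums, whereas the paper expands the product of per-fibre sums into a single sum --- so the combinatorial content is identical.
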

\begin{proof}
The naturality for $\posw$ boils down to showing that for arbitrary $h\colon X\to Y$, $\G(h) \after \posw_X = \posw_Y \after \F(h) \colon \F(X) \to \G(Y)$. In this proof we fix an arbitrary function $h\colon X\to Y$. We start from an arbitrary $\varphi \in \F(X)$. The case where $\supp(\varphi) = \emptyset$ is easy. So we assume that $\varphi$ has a finite non-empty support $\supp(\varphi)$. On one hand, $\posw_X(\varphi)$ has support $\powset(\supp(\varphi))$, and for any $A\in \powset (\supp(\varphi))$, $\posw_X(\phi)(A) = \left( \prod_{a\in A} \varphi(a) \right) \cdot \left( \prod_{a\in \supp(\varphi)\setminus A} (1 - \varphi(a)) \right)$. Then the support of $\G(h)(\posw_X(\varphi))$ is $h[\supp(\posw_X(\varphi))] = h[ \powset (\supp(\phi)) ]$, and for each $B\in h[ \powset (\supp(\phi)) ]$, 
\begin{align*}
\G(h) (\posw_X(\varphi)) (B) & = \sum_{\substack{A\in \supp(\posw_X(\phi)) \\ h[A] = B}} \posw_X(\phi)(A) \\
& = \sum_{\substack{A\in \powset(\supp(\varphi)) \\ h[A] = B}} \posw_X(\phi)(A) \\
& = \sum_{\substack{ A\in \powset(\supp (\varphi)) \\ h[A] = B}} \left( \prod_{a\in A} \varphi(a) \cdot \prod_{a\in \supp(\varphi)\setminus A} (1 - \varphi(a)) \right) \numberthis \label{eq:posw-nat-fst-comp}
\end{align*}

On the other hand, $\F(h)(\varphi)$ has support $h[\supp(\varphi)]$, and for arbitrary $y\in h[\supp(\varphi)]$, $\F(h)(\varphi)(y) = \biglor_{x\in h^{-1}(y)} \varphi(x) = 1 - \prod_{x\in h^{-1}(y)}(1-\varphi(x))$. Then $\posw_Y(\F(h)(\varphi))$ has support $\powset(\supp(\F(h)(\varphi))) = \powset(h[\supp(\varphi)])$. For each $B' \in \powset(h[\supp(\varphi)])$,

\begin{equation}\label{eq:posw-nat-snd-comp}
\posw_Y(\F (h) (\varphi))(B')  = \prod_{y\in B'} \F(h)(\varphi)(y) \cdot \prod_{y\in h[\supp(\varphi)]\setminus B'} \left( 1 - \F(h)(\varphi)(y) \right)
\end{equation}

Note that the two supports $h[\powset(\supp(\phi))]$ and $\powset(h[\supp(\phi)])$ are equivalent: $B\in h[\powset(\dom(\varphi))]$ if and only if $\exists A\subseteq \dom(\varphi)$ such that $h[A] = B$; $B'\in \powset(h[\supp(\varphi)])$ if and only if $B'\subseteq h[\supp(\varphi)]$ if and only if $\exists A' \subseteq \supp(\varphi)$ such that $h[A'] = B'$.

We spell out the operation of $\F$ on $h$ in \eqref{eq:posw-nat-snd-comp}, and the steps \eqref{eq:equiv-pos-worlds-true}, \eqref{eq:dist-sum-prod} below will be explained in detail after the calculation:
\begin{align*}
& ~~\posw_Y(\F (h) (\varphi))(B') \\
= & \prod_{y\in B'} \left( 1 - \prod_{x\in h^{-1}(y)} (1 - \varphi(x)) \right) \cdot \prod_{y\in h[\supp(\varphi)]\setminus B'} \left( 1 - \left( 1 - \prod_{x\in h^{-1}(y)}  (1 - \varphi(x)) \right) \right) \\
= &  \prod_{y\in B'} \left( 1 - \prod_{x\in h^{-1}(y)} (1 - \varphi(x)) \right) \cdot \prod_{y\in h[\supp(\varphi)]\setminus B'} \prod_{x\in h^{-1}(y)}  (1 - \varphi(x)) \\
\eql{\eqref{eq:equiv-two-sides}} & \prod_{y \in B'} \left( \sum_{Z\subseteq_i h^{-1}(y)} \left( \prod_{x\in Z} \varphi(x) \cdot \prod_{x\in h^{-1}(y)\setminus Z} (1 - \varphi(x)) \right) \right) \cdot \prod_{\substack{y\in h[\supp(\varphi)]\setminus B' \\ x\in h^{-1}(y)}} (1 - \varphi(x)) \numberthis \label{eq:equiv-pos-worlds-true} \\
= & \sum_{\substack{A'\subseteq \supp(\varphi) \\ h[A'] = B'}} \left( \prod_{x\in A'} \varphi(x) \cdot \prod_{x\in h^{-1}[B] \setminus A'} (1- \varphi(x)) \right) \cdot \prod_{\substack{y\in h[\supp(\varphi)]\setminus B' \\ x\in h^{-1}(y)}} (1 - \varphi(x)) \numberthis \label{eq:dist-sum-prod} \\
= & \sum_{\substack{A'\subseteq \supp(\varphi) \\ h[A'] = B'}} \left( \prod_{x\in A'} \varphi(x) \cdot \prod_{x\in h^{-1}[B'] \setminus A'} (1- \varphi(x)) \right) \cdot \prod_{\substack{x\in \supp(\varphi) \\ x\not\in h^{-1}[B']}} (1 - \varphi(x)) \\
= & \sum_{\substack{A'\subseteq \supp(\varphi) \\ h[A'] = B'}} \left( \prod_{x\in A'} \varphi(x) \cdot \prod_{x\in h^{-1}[B'] \setminus A'} (1- \varphi(x)) \cdot \prod_{\substack{x\in \supp(\varphi) \\ x\not\in h^{-1}[B']}} (1 - \varphi(x)) \right) \\
= & \sum_{\substack{A'\subseteq \supp(\varphi) \\ h[A'] = B'}} \left( \prod_{x\in A'} \varphi(x) \cdot \prod_{x\in \supp(\varphi) \setminus A'} (1 - \varphi(x)) \right) \\
\eql{\eqref{eq:posw-nat-fst-comp}} & ~~\G(h)(\F(\phi))(B')
\end{align*}
In \eqref{eq:equiv-pos-worlds-true}, $\subseteq_i$ is the relation $\subseteq$ restricted to non-empty subsets. Formally, \eqref{eq:equiv-pos-worlds-true} is obtained via the following equation, for arbitrary set $X$, $\psi\in \F(X)$ and set $A\subseteq \supp(\psi)$:
\begin{equation}\label{eq:equiv-two-sides}
1 - \prod_{a\in A} (1 - \psi(a)) = \sum_{Z\subseteq_i A} \left(\prod_{a\in Z} \psi(a) \cdot \prod_{a\in A\setminus Z} (1 - \psi(a)) \right)
\end{equation}
Intuitively, one can think of $A$ as a set of independent events, and $\psi(a)$ as the probability of the event $a\in A$. Then the left-hand-side of \eqref{eq:equiv-two-sides} calculates $1$ minus the probability that none of $a\in A$ happens; the right-hand-side sums up the probabilities of all possible worlds in which some of $a\in A$ happens. Both calculate the same value, namely the probability that at least one $a\in A$ happens.

The expression \eqref{eq:dist-sum-prod} is obtained by expanding $\prod_{y\in B'} \sum_{Z\subseteq_i h^{-1}(y)} G(y, Z)$ into a summation, where $G(y, Z) = \prod_{x\in Z} \varphi(x) \cdot \prod_{x\in h^{-1}(y) \setminus Z} (1 - \phi(x))$. In the resulting summation, each summand chooses exactly one element from each $\sum_{Z\subseteq_i h^{-1}(y)} G(y, Z)$, for all $y\in B'$. Moreover, every such choice corresponds to exactly one $A'$ satisfying $h[A'] = B'$. So we have
\begin{align*}
\prod_{y\in B'} \sum_{Z\subseteq_i h^{-1}(y)} G(y, Z) & = \sum_{\substack{A'\subseteq \supp(\varphi) \\ h[A'] = B'}} \prod_{y\in B'} G(y, h^{-1}(y) \cap A') \\
& = \sum_{\substack{A'\subseteq \supp(\varphi) \\ h[A'] = B'}} \left( \prod_{y \in B'} \left( \prod_{x\in h^{-1}(y) \cap A'} \varphi(x) \cdot \prod_{x\in h^{-1}(y) \setminus (h^{-1}(y) \cap A')} (1 - \phi(x)) \right) \right) \\
& = \sum_{\substack{A'\subseteq \supp(\varphi) \\ h[A'] = B'}} \left( \prod_{x\in A'} \varphi(x) \cdot \prod_{x\in h^{-1}[B'] \setminus A'} (1 - \varphi(x)) \right) 
\end{align*}
and this is exactly the summation in \eqref{eq:dist-sum-prod}.

Therefore $(\G(h) \after \posw_X)(\phi)$ and $(\posw_Y \after \F(h))(\phi)$ have the same support, and both have the same values for arbitrary $B$ in the support.
\end{proof}

\end{document}